\documentclass[reprint,nofootinbib,amsmath,amssymb,aps, pra]{revtex4-2}
\usepackage{amsmath} % 数式のためのパッケージ
\usepackage{amsthm}%証明のため
\theoremstyle{plain}
\usepackage{stmaryrd}
\usepackage{braket}  % ブラケット記法のためのパッケージ
\usepackage{here}
\usepackage{graphicx}
\usepackage{hyperref}
\newtheorem{thm}{Theorem}
\newtheorem{dfn}{Definition}
\newtheorem{lem}{Lemma}
\usepackage{rotating}
\usepackage{longtable} % longtableを使う際に必要
\usepackage{booktabs} % \toprule, \midrule, \bottomrule を使うため
\usepackage{makecell}
\usepackage{color}

\makeatother % プリアンブルで定義終了

\begin{document}

\title{Efficient Fault-Tolerant Ancilla Preparation for Quantum BCH codes via Cyclic Symmetry.}
% \thanks{Footnote to title of article.}

\author{Kohei Yamamoto}
\email{u660810a@ecs.osaka-u.ac.jp}
\affiliation{%
  Graduate School of Engineering Science, Osaka University, 1-3 Machikaneyama, Toyonaka, Osaka 560-8531, Japan
}

\author{Keisuke Fujii}%
\email{fujii@qc.ee.es.osaka-u.ac.jp}
\affiliation{%
  Graduate School of Engineering Science, Osaka University, 1-3 Machikaneyama, Toyonaka, Osaka 560-8531, Japan
}%
\affiliation{%
  Graduate School of Informatics, Kyoto University, Sakyo-ku, Kyoto, 606-8501, Japan
}%
\affiliation{%
  Center for Quantum Information and Quantum Biology, Osaka University, 1-2 Machikaneyama, Toyonaka 560-0043, Japan
}%
\affiliation{%
  RIKEN Center for Quantum Computing (RQC), Hirosawa 2-1, Wako, Saitama 351-0198, Japan
}%

\begin{abstract}
One of the major challenges in realizing fault-tolerant quantum computers (FTQCs) is the requirement for a large number of physical qubits. To address this issue, high-rate quantum error correcting codes, which efficiently embed logical qubits into physical qubits, have recently attracted considerable attention.
Among such codes, quantum BCH codes, which offer both high rates and large code distances, are promising yet underexplored candidates. However, no fault-tolerant ancilla preparation method specialized for this class had been established.
We employ a two-stage approach (non-fault-tolerant preparation + entanglement distillation) for ancilla preparation. We then propose a framework for designing low-overhead distillation method that strategically leverages the cyclic symmetry of quantum BCH codes to determine which non-fault-tolerant circuits can successfully produce a fault-tolerant state.
Numerical simulations on several high-performance quantum BCH codes up to 127 qubits demonstrate that our method achieves lower spatial overhead and logical error rates than conventional distillation circuits. Furthermore, we evaluated the logical error rates under a circuit-level noise model, and obtained performance benchmarks in realistic settings.
This efficient state preparation technique is expected to contribute to the early realization of practical FTQCs, particularly on highly connected quantum platforms such as neutral atom systems.
\end{abstract}

\maketitle

\section{Introduction}
% 変更後（2026/04/05）
Quantum computers exploit quantum mechanical principles to perform computations and are expected to efficiently solve certain problems that are intractable for classical computers. However, quantum systems are highly susceptible to noise and decoherence, which severely limits the reliability of large-scale quantum computations. Fault-tolerant quantum computation (FTQC)~\cite{FTQC_shor,FTQC_gottesman,NC} addresses this challenge by encoding and protecting quantum information using quantum error correction (QEC).

The surface code has been widely adopted as the standard error correction code due to its simple structure and high error threshold, with extensive efforts dedicated to developing its theoretical framework~\cite{fowler2012surface,fowler2018low,gidney2024magic} and demonstrating its viability on physical hardware~\cite{google2025quantum,google2025msc}.
However, solving practical-scale problems using surface codes requires a massive number of physical qubits, representing one of the greatest barriers to realizing FTQC~\cite{gidney2025factor,zhou2025resource}.

As a promising strategy to address this challenge, high-rate codes, which can encode logical information more densely into physical qubits, have recently attracted significant attention~\cite{breuckmann2021quantum,bravyi2024high}.
Since many high-rate codes possess more complex structures compared to surface codes, their implementation has been considered difficult on platforms with limited connectivity, such as superconducting qubits.
However, with the recent advancements in quantum hardware capable of all-to-all connectivity, such as neutral atom~\cite{bluvstein2022quantum,bluvstein2024logical} and ion trap systems~\cite{Ion_RyanAnderson,Ion_RyanAnderson2,Ion_postler}, the implementation of high-rate codes has become increasingly realistic.

% potentially leading に代わる追加分
High-rate quantum error-correcting codes have been extensively studied through approaches such as concatenated constructions and quantum LDPC codes, yet each faces important limitations. 
Quantum LDPC codes offer the key advantage of maintaining a constant encoding rate asymptotically. However, in the practically relevant regime of moderate block sizes, explicit constructions remain limited, and representative examples such as Bivariate Bicycle Code typically achieve encoding rates of only around 10\%~\cite{bravyi2024high,pecorari2025high,liang2025generalized}. 
Conversely, concatenated high-rate codes can attain significantly higher encoding rates~\cite{goto_hyperqube,yoshida2025concatenate}, but achieving large code distances requires multiple levels of concatenation, leading to a rapid increase in the number of physical qubits and substantial resource overhead.

Motivated by these limitations, we focus on quantum BCH codes~\cite{grassl1999quantum}. 
First introduced by Markus Grassl and Thomas Beth in 1999, this family inherits the favorable properties of classical BCH codes, which are well known for their high code rates and strong error-correcting capabilities~\cite{macwilliams1977theory}. 
In particular, quantum BCH codes can achieve both high encoding rates and large code distances at moderate block sizes~\cite{grassl1999quantum}. 
In this work, we consider a class of such codes whose lengths take the form $2^m-1$, which are well suited for implementation within realistic physical qubit budgets.
Despite these advantages, this class of codes has not been extensively explored. In particular, key ingredients for practical implementations remain largely undeveloped, including fault-tolerant state preparation schemes and systematic performance evaluation under realistic physical error models.

In this paper, we propose an efficient and fault-tolerant state preparation method for logical Pauli basis states of quantum BCH codes.
While various approaches to fault-tolerant state preparation have been studied, including syndrome-measurement-based
~\cite{fowler2012surface,cohen2022low} and flag-based methods~\cite{goto_steane,goto_hyperqube}, we adopt an entanglement distillation approach~\cite{golay,reed_muller} 
in consideration of the structural properties of quantum BCH codes, such as their high-weight stabilizers and generally large code distances.
The proposed construction consists of a non-fault-tolerant encoding circuit followed by an entanglement distillation circuit for purification.
Conventional distillation methods typically require a massive number of ancilla qubits to guarantee fault tolerance, scaling as at least $(t+1)^2$, where $t$ is the number of correctable errors~\cite{gottesman2024surviving}.
To address this issue, we introduce an optimization technique that significantly reduces the required ancilla count by exploiting the cyclic symmetry inherent in quantum BCH codes.
For the $[[31,11,5]]$, $[[63,27,7]]$, and $[[127,71,9]]$ codes, our method reduces the number of required ancilla qubits by more than half compared with conventional approaches.
In addition, to evaluate the performance of the constructed encoding circuit, we perform numerical simulations and compute the output error distribution as well as the success probability.
The simulation results demonstrate reduced error rates at each weight level together with an improved success probability compared to standard methods.
This method provides a systematic framework for resource-efficient fault-tolerant state preparation across the entire quantum BCH code family.

Furthermore, we perform a comprehensive threshold analysis under circuit-level noise using error-corrected teleportation~\cite{knill_gadget} to evaluate the fault-tolerant performance of the proposed scheme.
The study covers all quantum BCH codes with lengths up to 127, as well as length-255 codes with distances up to 13.
In this analysis, we evaluate the performance in terms of the \emph{scaling threshold}, defined as the physical error rate at which the leading-order logical error rate equals the physical error rate.
This metric provides a stricter benchmark than the pseudo-threshold, as it captures asymptotic logical-error suppression at smaller physical error rates.
Our results show that higher optimization levels lead to a reduced logical error rate, resulting in a corresponding improvement in the scaling threshold.
Notably, our systematic evaluation quantitatively clarifies the trade-off between code rate and error-correction performance across a diverse range of quantum BCH code parameters. 
As a representative example, we identify specific codes that maintain scaling thresholds comparable to the Steane code~\cite{Steane_code_1,css_2} while achieving two to four times higher code rates at larger code distances.
Overall, this work provides a systematic and comprehensive threshold evaluation of quantum BCH codes based on explicit fault-tolerant state preparation and error-correction protocols.
These results establish quantum BCH codes as a promising platform for large-scale fault-tolerant quantum computation, offering both high resource efficiency and strong error-correction capability.

\section{Preliminaries: Quantum BCH Codes}
Quantum BCH codes are a subclass of CSS codes~\cite{css_1,css_2} constructed from classical BCH codes.
In this section, we first describe the definition of cyclic codes, which is essential for constructing classical BCH codes. Subsequently, we introduce classical BCH codes and quantum BCH codes.
Although these codes are generally defined over $q$-ary fields, we limit our discussion to the binary field ($q=2$) in this paper.

\subsubsection{Cyclic Codes}
A cyclic code is defined as follows.
\begin{dfn}[Cyclic codes~\protect\cite{daisuukei}]
Let $C$ be a binary linear code with parameters $[n,k]$.
If, for every codeword $c=(c_{n-1},\ldots,c_1,c_0)\in C$, its cyclic shift
$c'=(c_{n-2},\ldots,c_0,c_{n-1})$ also belongs to $C$, then $C$ is called a cyclic code.
\end{dfn}

For algebraic convenience, codewords are represented as polynomials in the quotient ring
$\mathbb{F}_2[x]/(x^n-1)$.
The codeword $c=(c_{n-1},\ldots,c_1,c_0)\in\mathbb{F}_2^n$ is identified with the code polynomial
\begin{equation}
    c(x)=c_{n-1}x^{n-1}+\cdots+c_1x+c_0.
\end{equation}
Under this representation, a cyclic shift corresponds to multiplication by $x$ modulo $x^n-1$:
\begin{equation}
    c'(x)=c_{n-2}x^{n-1}+\cdots+c_0x+c_{n-1}
    \equiv x c(x)\pmod{x^n-1}.
\end{equation}

We next introduce the generator polynomial, which plays a central role in the theory of cyclic codes.
For an $[n,k]$ cyclic code over $\mathbb{F}_2$, the generator polynomial $g(x)$ is defined as the unique monic polynomial of minimum degree among all nonzero code polynomials.
The generator polynomial satisfies the following well-known properties~\cite{daisuukei}:
\begin{enumerate}
\renewcommand{\labelenumi}{(\roman{enumi})}
    \item $g(x)$ is unique.
    \item Every code polynomial $c(x)\in C$ is divisible by $g(x)$.
    \item $g(x)$ divides $x^n-1$ in $\mathbb{F}_2[x]$.
    \item The code dimension is given by $k=n-\deg g(x)$.
\end{enumerate}

Hence, a cyclic code is uniquely characterized by its generator polynomial.
Conversely, any monic polynomial $g(x)\in\mathbb{F}_2[x]$ of degree $s$ that divides $x^n-1$ defines an $[n,n-s]$ cyclic code over $\mathbb{F}_2$.
This correspondence provides a convenient and systematic framework for constructing cyclic codes.

Finally, we describe the parity-check matrix of an $[n,k]$ cyclic code.
Let $g(x)$ be the generator polynomial.
The polynomial $h(x)$ defined by
\begin{equation}
    g(x)h(x)=x^n-1
\end{equation}
is called the check polynomial and can be written as
\begin{equation}
    h(x)=\frac{x^n-1}{g(x)}=h_k x^k+\cdots+h_1 x+h_0 .
\end{equation}
The parity-check matrix $H$ is constructed from the coefficients of $h(x)$ as a circulant matrix,
\begin{equation}
\label{eq:cyclic_pcm}
H=\begin{bmatrix}
    h_0 & h_1 & h_2 & \cdots & h_{k} & 0 & 0 & \cdots &0\\
    0 & h_0 & h_1 & h_2 & \cdots & h_{k}&0& \cdots &0\\
    0& 0 & h_0 & h_1 & h_2 & \cdots & h_{k}& \cdots &0\\
    \vdots & \vdots & \vdots   & \cdots & \vdots & \vdots & \vdots & \cdots &\vdots\\
    0& 0 & 0   & \cdots & &  &  & \cdots &h_k\\
    \end{bmatrix}.
\end{equation}

\subsubsection{Classical BCH Codes}
We begin by defining classical BCH codes.

\begin{dfn}[BCH codes~\cite{daisuukei}]
Let $\alpha$ be a primitive element of the finite field $\mathbb{F}_{2^m}$, namely, a generator of the multiplicative group $\mathbb{F}_{2^m}^{\times}$.
For each $\beta \in \mathbb{F}_{2^m}$, let $m_{\beta}(x)\in\mathbb{F}_2[x]$ denote the minimal polynomial of $\beta$ over $\mathbb{F}_2$, i.e., the monic polynomial of smallest degree having $\beta$ as a root.
Define
\begin{equation}
    g(x) = \mathrm{LCM} \bigl\{ m_{\alpha}(x), m_{\alpha^2}(x), \dots, m_{\alpha^{\delta-1}}(x) \bigr\},
\end{equation}
where $2\le \delta \le n$ and $n=2^m-1$.
Then the cyclic code of length $n$ generated by $g(x)$ is called a BCH code, specifically a narrow-sense primitive BCH code, with designed distance $\delta$.
\end{dfn}

We now state the BCH bound, which describes a fundamental property regarding the minimum distance of BCH codes.
Let $\delta$ be the designed distance and $d_{\min}$ be the minimum distance of a BCH code.
The following relationship holds:
\begin{equation}
    d_{\min} \ge \delta.
\end{equation}
This property guarantees that a BCH code constructed with a designed distance $\delta$ achieves a minimum distance of at least $\delta$, thereby ensuring the desired error-correction capability.

Moreover, BCH codes are highly resilient to random errors and exhibit superior code rates in the short- to intermediate-length regime. This efficiency is evidenced by the fact that many BCH codes with lengths up to $512$ and distances up to $29$ are classified as ``best codes known" in the tables compiled by MacWilliams and Sloane~\cite{macwilliams1977theory}. These are codes that maximize the number of logical bits $k$ for a given $n$ and $d$, and such excellent classical parameters provide a strong motivation for exploring their quantum counterparts to achieve high-performance quantum error correction.

\subsubsection{Quantum BCH Codes}
Quantum BCH codes are CSS codes constructed based on classical BCH codes.

\begin{dfn}[Quantum BCH codes~\cite{grassl1999quantum}]
Let $C$ be a classical BCH code with parameters $[n_c, k_c, d_c]$ and parity check matrix $H_c$.
If the classical code $C$ satisfies the dual-containing constraint (i.e., $C^\perp \subset C$), a CSS code can be constructed using $C$ and its dual code $C^{\perp}$.
The stabilizer matrix of the resulting quantum BCH code $\llbracket n=n_c, k=2k_c-n, d=d_c \rrbracket$ is given by
\begin{equation}
    H=
    \begin{pmatrix}
    H_c & 0 \\
    0 & H_c \\
    \end{pmatrix}.
\end{equation}
\end{dfn}

Quantum BCH codes inherit the characteristics of classical BCH codes, possessing high code rates and good error-correction capabilities.
For example, among codes with comparable code lengths and code distances, the quantum BCH code $\llbracket 127,43,13 \rrbracket$ achieves an encoding rate approximately $50$ times higher than that of the surface code $\llbracket 144,1,12 \rrbracket$ and about $4$ times higher than that of the BB code $\llbracket 144,12,12 \rrbracket$.
However, the stabilizer operators of quantum BCH codes exhibit a complex structure.
For instance, the stabilizer operators of the quantum BCH $\llbracket 31,11,5 \rrbracket$ code are 12-body operators (operators with a weight of 12). In general, the stabilizer operators of quantum BCH codes tend to be high-weight operators.

\section{Fault-tolerant ancilla preparation for quantum BCH codes}
\label{sec:main}
In this section, we present our fault-tolerant (FT) state preparation method for the logical zero state $\ket{0_L^{\otimes k}}$ of quantum BCH codes.
We begin by defining fault tolerance and then review the state preparation method based on entanglement distillation.
Finally, we introduce our main contribution, cyclic symmetry can be exploited to improve the efficiency of entanglement-distillation protocols for quantum BCH codes.

\begin{dfn}[Strict Fault Tolerance~\cite{golay}]
Consider a quantum error-correcting code with distance \(d\). A state-preparation circuit is strictly fault-tolerant (strict-FT) if, for any \(w\) faults with \(w \le \lfloor d/2 \rfloor\), the reduced weight of the output error is at most \(w\).
\end{dfn}

Here, the reduced weight is the minimum weight up to the freedom of stabilizer and logical operators that stabilize the logical state.
For example, even if a fault propagates to produce a high-weight physical error, it is regarded as reduced weight zero if it is equivalent to a stabilizer operator. 
Thus, the strict-FT condition may still be satisfied.

For CSS codes, a non-fault-tolerant (non-FT) preparation circuit for $\ket{0}^{\otimes n}$ can be constructed using CNOT and Hadamard gates~\cite{mqt_qecc}. 
However, such circuits typically involve a large number of CNOT gates, which precludes them from satisfying the strict-FT condition. 
To transform these noisy states into high-fidelity logical states, we introduce an entanglement distillation protocol.

A distillation circuit generally consists of two steps: the first step detecting $X$ errors, and the second step detecting $Z$ errors.
In particular, each step corresponds to a classical linear code.
In this work, we adopt the simplest repetition-code-based $m$-to-1 protocol using the $[m,1,m]$ classical repetition code.
It is known that for a stabilizer code $\llbracket n,k,d=2t+1 \rrbracket$, $(t+1)$-to-1 repetition protocol ensures strict-FT ~\cite{gottesman2024surviving}.

Suppose we aim to reduce the number of ancilla blocks by employing a $t$-to-1 protocol (with fewer ancillas than $(t+1)$).
A major difficulty is the existence of malignant fault patterns: if faults occur at the same physical positions within the non-FT preparation circuits of different ancillas, these errors may cancel out during distillation, preventing correct error detection.
Recent works have shown that code symmetries can be exploited to mitigate such malignant fault alignments, leading to more efficient state preparation protocols.
For example, in quantum Golay~\cite{golay} and Reed–Muller codes~\cite{reed_muller}, specific symmetry structures are used to permute error patterns among ancillas so that their supports are mapped to distinct coordinate positions.
In contrast, for quantum BCH codes, the structure of code symmetries has not been explicitly characterized, and their systematic use for suppressing malignant fault patterns has not been developed.
We address this by identifying and formalizing the relevant symmetries, and establishing a symmetry-based design principle that enables a systematic construction of efficient state preparation protocols across the quantum BCH code family
(as illustrated in Fig.~\ref{fig:method_overview}).
\begin{figure}[t] % figure* にして [t] を指定
  \centering
  \includegraphics[width=0.5\textwidth]{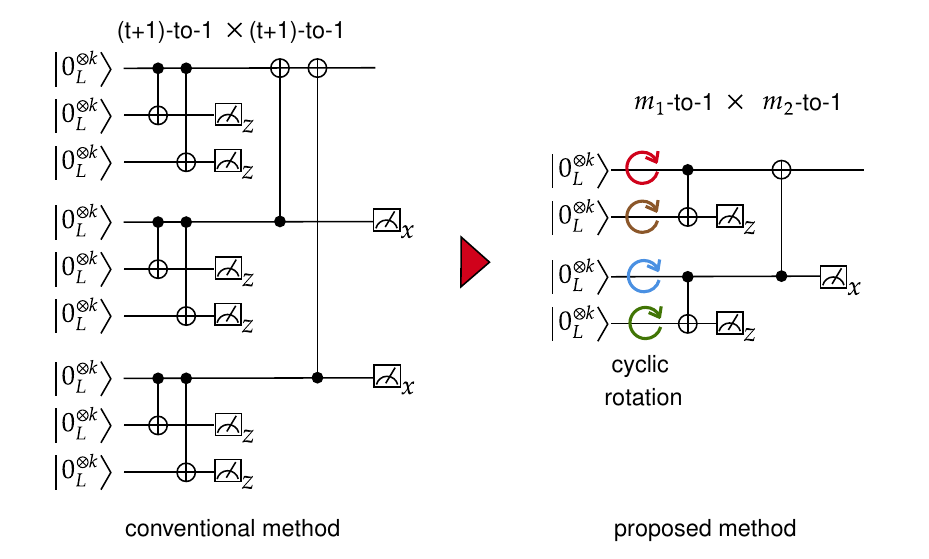} % width=\textwidth でOK（ページ幅いっぱいになる）
  \caption{Schematic overview of the proposed method. An example for the case $t=2$ and $m_1=m_2=2$ is shown.}
  \label{fig:method_overview}
\end{figure}

Quantum BCH codes are CSS codes constructed from classical binary BCH codes, inheriting their permutation symmetries from the automorphism group of the underlying classical code.
For classical binary BCH codes, the following result is established:
\begin{thm}[{~\cite[Theorem 8]{BCH_automorphism}}]
For a binary BCH code with parameters $n=2^m-1$,
\begin{equation}
\label{eq:BCHAut}
\text{Aut}(\text{BCH}(n,k,d)) \supseteq C_n \rtimes F_m,
\end{equation}
where $C_n$ denotes the cyclic permutation group and $F_m$ the Frobenius permutation group.
\end{thm}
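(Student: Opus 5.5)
The plan is to verify directly that the two generating families of permutations preserve the code, and then check that together they form the semidirect product. Identify coordinates with the exponents $\{0,1,\dots,n-1\}\cong\mathbb{Z}_n$, so that a codeword is a polynomial $c(x)\in\mathbb{F}_2[x]/(x^n-1)$. The key tool is the root characterization of a narrow-sense primitive BCH code. Because $g(x)$ is the LCM of the minimal polynomials $m_{\alpha^i}$ for $1\le i\le\delta-1$, and $x^n-1$ is separable over $\mathbb{F}_2$ since $n$ is odd, we have
\[
c\in C \iff c(\alpha^i)=0 \ \text{for all } 1\le i\le \delta-1 .
\]
Equivalently, $c(\alpha^j)=0$ for every $j$ in the union $Z$ of the $2$-cyclotomic cosets of $\{1,\dots,\delta-1\}$ modulo $n$. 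I would state this first, since it follows from properties (ii)–(iii) of generator polynomials recalled above.

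\textbf{Cyclic part.} The cyclic shift $\sigma:i\mapsto i+1$ acts as $c(x)\mapsto xc(x)\bmod (x^n-1)$. This preserves $C$ by the definition of a cyclic code. In root form it reads $(xc)(\alpha^j)=\alpha^j c(\alpha^j)=0$. Hence $C_n=\langle\sigma\rangle\cong\mathbb{Z}_n$ lies in $\mathrm{Aut}(C)$.

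\textbf{Frobenius part.} Let $\phi:i\mapsto 2i \bmod n$. This is a permutation because $\gcd(2,n)=1$, and it has order $m$ because $2^m\equiv1 \pmod n$ and $2^j\not\equiv 1$ for $0<j<m$. On polynomials it sends $c(x)\mapsto c(x^2)\bmod (x^n-1)$. Over $\mathbb{F}_2$ we have $c(x^2)=c(x)^2$, so $c(x^2)$ evaluated at $\alpha^j$ equals $c(\alpha^j)^2$. This vanishes whenever $c(\alpha^j)$ does, so $\phi$ preserves $C$. Alternatively one can use the root set directly: $\phi$ maps the zero set to $2^{-1}Z=Z$, because $Z$ is a union of cyclotomic cosets. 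Thus $F_m=\langle\phi\rangle\cong\mathbb{Z}_m$ lies in $\mathrm{Aut}(C)$.

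\textbf{Group structure.} It remains to check that $\langle\sigma,\phi\rangle$ is $C_n\rtimes F_m$. The conjugation relation is $\phi\sigma\phi^{-1}:i\mapsto 2(2^{-1}i+1)=i+2$, so $\phi\sigma\phi^{-1}=\sigma^2$. Therefore $F_m$ normalizes $C_n$. The intersection $C_n\cap F_m$ is trivial: $\phi^j$ fixes $0$, while the only power of $\sigma$ fixing $0$ is the identity. Hence the generated group is the internal semidirect product, of order $nm$, and it is exactly the affine group $\{i\mapsto 2^a i+b\}$.

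I expect the main obstacle to be the clean justification of the root characterization. In particular one must show that the cyclotomic closure of $\{1,\dots,\delta-1\}$ is precisely the zero set of $g$. This uses the fact that the roots of $m_{\alpha^i}$ are the Galois conjugates $\alpha^{i2^s}$, together with the separability of $x^n-1$. Once that is in place, both invariance checks reduce to one-line computations. As a remark, the same group also acts on the quantum code: the stabilizer matrix has identical $X$ and $Z$ blocks given by $H_c$, and $C^\perp$ is preserved by any permutation preserving $C$.
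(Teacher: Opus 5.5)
The paper does not prove this statement; it imports it from Theorem 8 of the cited reference. The real subject of that reference is the \emph{full} automorphism group: the inclusion is the easy half there, and the substance is equality together with its exceptions, such as the Reed--Muller family mentioned in the text.

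Your direct verification is correct. It is also all the paper actually uses, namely that $R$ and $F$ preserve $C$, hence $C^\perp$ and the CSS stabilizer. The invariance checks, the order of $\phi$, the relation $\phi\sigma\phi^{-1}=\sigma^2$ and the trivial intersection together give an internal semidirect product of order $nm$ inside $\mathrm{Aut}(C)$. Equivalently, this is $\Gamma L(1,2^m)$ acting on the coordinates $\alpha^i$.

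The step you flag as the main obstacle is not really one. For $c\in\mathbb{F}_2[x]$ one has $c(\alpha^i)=0\iff m_{\alpha^i}\mid c$, so the universal property of the LCM gives $c\in C\iff c(\alpha^i)=0$ for $1\le i\le\delta-1$, with no appeal to separability. More strongly, the Frobenius step needs no roots at all, since $c(x^2)\equiv c(x)\cdot c(x)$ lies in the ideal $C$.

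Your argument therefore shows that $C_n\rtimes F_m$ preserves every binary cyclic code of length $2^m-1$. This makes explicit that the inclusion needs nothing BCH-specific; the BCH-specific content of the cited theorem lies in the equality statement, which the paper does not need.
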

This inclusion is generally tight; equality holds for most parameters, with notable exceptions being the Reed–Muller family and specific trivial cases. Detailed characterizations of these automorphism groups can be found in ref.~\cite{BCH_automorphism}.
Here, the cyclic group $C_n$ consists of coordinate shifts $j \mapsto j+1 \pmod n$, while the Frobenius group $F_m$ is generated by the permutation $j \mapsto 2j \pmod n$ corresponding to the field automorphism.

Consequently, every quantum BCH code inherits symmetries from the semi-direct product $C_n \rtimes F_m$, which we broadly denote as ``cyclic symmetry'' (including the Frobenius maps).
Specifically, the actions of the cyclic shift operator $R$ and the Frobenius shift operator $F$ on the physical qubits constitute the logical operators themselves.
As a result, the encoded logical states $|0_L^{\otimes k}\rangle$ and $|+_L^{\otimes k}\rangle$ form eigenstates for these operators.
We exploit this inherent code symmetry to efficiently permute error patterns within the entanglement distillation procedure.

The performance of the proposed method depends on the structure of the underlying non-fault-tolerant (non-FT) circuits.
To establish a realistic baseline with minimized gate counts, we employed the \texttt{mqt.qecc} compiler~\cite{mqt_qecc}, which demonstrated superior performance compared to other tools (Qiskit~\cite{qiskit2024}, Pytket~\cite{Pytket}, Stim~\cite{stim}, A* algorithm~\cite{astar}, and PyZX~\cite{kissinger2019pyzx}) in our preliminary benchmarks.
The specific non-FT circuit structures used in this study are provided in Appendix~\ref{sec:appendix_circuit}.

We next describe how to search for pairs of cyclic permutations that satisfy strict fault tolerance.
The fault-tolerance condition can be checked separately for $X$ errors detected in the first step and $Z$ errors detected in the second step.
It is therefore sufficient to verify that no malignant pattern exists in either case.
Specifically, we consider all $w$-fault errors with $2\leq w\leq \lfloor d/2 \rfloor$.
For each $w$, we divide the analysis into cases according to how the $w$ faults are distributed among the non-FT preparation circuits and, when necessary, faults associated with transversal CNOT gates.
Each case is then checked separately.

We now explain the malignant-pattern check in detail using the quantum BCH code $\llbracket127,71,9\rrbracket$ as an example.
The malignant patterns that must be checked for $X$ and $Z$ errors are illustrated in Figs.~\ref{fig:ft_check}(a) and \ref{fig:ft_check}(b), respectively.
We focus on the $X$-error case in Fig.~\ref{fig:ft_check}(a).
Specifically, we consider the case $w=3$, where two faults occur in the first non-FT preparation circuit and one fault occurs in the second.
First, for each non-FT preparation circuit, we simulate all possible single-fault errors using Stim.
For each resulting error, after applying the corresponding cyclic permutation, we compute its parities with respect to the stabilizer generators and logical operators, and store them in a table.
We then use the fact that any multi-fault error can be represented as the sum of single-fault errors.
Let $s_1$ be the syndrome of a two-fault error from the first non-FT preparation circuit, and let $s_2$ be the syndrome of a one-fault error from the second non-FT preparation circuit.
The condition for the error to be undetected in the first step is $s_1+s_2=0$, or equivalently $s_1=s_2$.
Thus, undetected error patterns can be enumerated by finding pairs of errors with matching syndromes.
A naive enumeration of all three-fault patterns is memory intensive.
Since the number of single-fault errors is of order $10^3$, roughly twice the number of CNOT gates in the non-FT preparation circuit, a naive construction would require a table with on the order of $10^9$ rows.
To avoid this, we formulate the search as a matching problem for equal syndromes and use a meet-in-the-middle strategy\cite{reed_muller}.
Finally, for each undetected pattern, we compute the reduced weight of the remaining $X$ error.
If there exists an undetected pattern whose reduced weight is at least $w+1$, the pattern is malignant.
Therefore, the chosen pair of cyclic permutations satisfies the strict-FT condition if no such pattern exists.
The same procedure can be applied to the other $X$-error patterns and to the $Z$-error patterns.

The possible distributions of $w$ faults are classified based on the following considerations.
In principle, faults may occur in the non-FT preparation circuits, transversal CNOT gates, and transversal measurements.
However, faults in transversal CNOT gates and measurements usually result in error patterns that are already covered by the preparation-circuit faults, or they do not lead to error amplification.
Therefore, in most cases, it is sufficient to classify fault patterns only by how the faults are distributed among the non-FT preparation circuits.
Additional cases need to be considered only for an $m$-to-$1$ protocol with $m\geq 3$.
In this case, errors on the upper block occurring before one of the first through $(m-1)$-th transversal CNOT gates can propagate through the final transversal CNOT and be amplified.
In the present example, such errors correspond to the starred location in Fig.~\ref{fig:ft_check}(b).
We therefore additionally check that there is no malignant pattern satisfying
$s_1=s_2=s_3+t_1$
whose remaining error, associated with $s_1+t_1$, has reduced weight at least $5$.

\begin{figure}[t] % figure* にして [t] を指定
  \centering
  \includegraphics[width=0.5\textwidth]{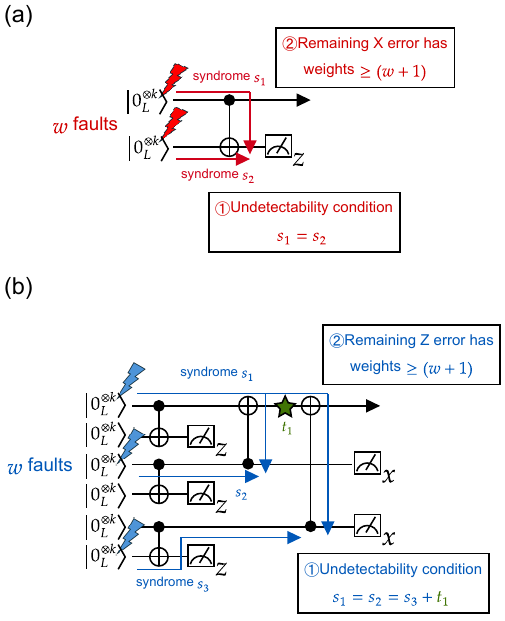} % width=\textwidth でOK（ページ幅いっぱいになる）
  \caption{Example of the strict-FT check for the quantum BCH code $\llbracket127,71,9\rrbracket$.
All $w$-fault errors with $2\leq w\leq 4$ are considered.
(a) Malignant-pattern check for $X$ errors in the first step using a 2-to-1 protocol.
(b) Malignant-pattern check for $Z$ errors in the second step using a 3-to-1 protocol.
Starred locations denote transversal-CNOT errors that require additional checks because they can be amplified by the final transversal CNOT.}
  \label{fig:ft_check}
\end{figure}

Applying this analysis to the quantum BCH codes $[[31,11,5]]$, $[[63,27,7]]$, and $[[127,71,9]]$, we observed that their inherent cyclic symmetry completely eliminates malignant fault patterns (see Table \ref{table:QBCHEfficiency}). 
This elimination allows for strict-FT distillation using significantly fewer ancilla blocks than the conventional protocol. 
Notably, we succeeded in drastically reducing the required ancilla count to $4/9$, $4/16$, and $6/25$, respectively, demonstrating the distinct advantage of the quantum BCH family in our optimization technique.

Although our simulations use CNOT-efficient non-FT circuits generated by the \texttt{mqt.qecc} compiler, the proposed optimization framework is applicable to general non-FT preparation circuits.
Further reductions in the CNOT count or circuit depth of the underlying circuits may enable additional levels of optimization, leading to further reductions in the required ancilla overhead.

\begin{table*}[t]
    \centering
    \begin{tabular}{lllll}
    \hline\hline
        Code & Non-FT & Standard & Optimized & \multicolumn{1}{c}{Example} \\
        $[[n,k,d]]$ & CNOT count& Protocol & Protocol & \multicolumn{1}{c}{Operator Sequence} \\
        \hline
        $[[31, 11, 5]]$  & 73 & $3$-to-$1 \times 3$-to-$1$ & $2$-to-$1 \times 2$-to-$1$ & $((I, R^6), (R^{12}, F))$\\
        $[[63, 27, 7]]$  & 189 & $4$-to-$1 \times 4$-to-$1$ & $2$-to-$1 \times 2$-to-$1$ & $((I,R^8),(R^{24},R^{32}))$ \\
        $[[127, 71, 9]]$ & 594 & $5$-to-$1 \times 5$-to-$1$ & $2$-to-$1 \times 3$-to-$1$ & $((I, R^{15}), (R^{30}, R^{45}), (R^{60}, R^{75}))$\\
        \hline\hline
    \end{tabular}
    \caption{
    Comparison of distillation protocols and symmetry configurations for quantum BCH codes.
    The ``Standard Protocol'' column indicates the conventional requirement based on the code distance (i.e., $(t+1)$-to-1), while the ``Optimized Protocol'' shows the reduced resource requirement achieved using our symmetry-based method.
    The last column provides an example of the symmetry operators (cyclic shifts $R$ and Frobenius shifts $F$).
    The parentheses indicate the grouping of input blocks corresponding to the structure of the concatenated protocols.
    }
    \label{table:QBCHEfficiency}
\end{table*}

\section{Numerical Simulation}
\label{sec:numerical}
To evaluate the performance gains of our proposed method, we conducted circuit-level noise simulations using \texttt{Stim}~\cite{stim} to verify the fault-tolerant nature of the optimized protocol for the quantum BCH $[[31,11,5]]$ code.
We adopted a standard circuit-level noise model where each operation is subject to stochastic errors characterized by a physical error rate $p$.
Specifically, each physical CNOT gate is followed by a two-qubit Pauli error, uniformly drawn from the set $\{I,X,Y,Z\}^{\otimes 2}\setminus\{I\otimes I\}$, with a probability of $p/15$ for each of the $15$ nontrivial cases.
Single-qubit gates are subjected to depolarizing noise with probability $p$, while state preparation and measurement (SPAM) operations are subject to bit-flip errors with the same probability $p$.
Under this noise model, we assessed the performance of the 2-to-1 protocol using the operations $((I,R^6),(R^{12},F))$ in Table~\ref{table:QBCHEfficiency} by comparing its logical error rate and acceptance probability with those of the standard 3-to-1 approach.

Figures~\ref{fig:weight_distribution_x} and \ref{fig:weight_distribution_z} show the weight distributions of residual $X$ and $Z$ errors in the output states, where the results for the 2-to-1 and 3-to-1 protocols are plotted in blue and red, respectively. 
In these figures, the probability of errors with weights 1, 2, and $\ge 3$ are presented from top to bottom. 
The numerical data clearly follow the scaling of $O(p)$, $O(p^2)$, and $O(p^3)$, respectively, confirming that the protocols satisfy the definition of strict fault tolerance.

To quantitatively evaluate the error rates, we categorize the residual output errors into two types: transversal errors generated within the distillation circuit and non-malignant errors that propagate from non-FT components and remain undetected. 
Theoretical analysis (see Appendix~\ref{sec:appendix_mto1}) shows that, to leading order in $p$, the effective transversal error rates are given by $4mp/15$ for X errors and $4p/15$ for Z errors, with higher-order contributions neglected.
These theoretical predictions are represented by the black solid lines in the figures.

The close agreement between the numerical data and the theoretical lines indicates that the output error rate is dominated by transversal errors. 
This further demonstrates that our optimization effectively reduces the $X$-error rate by $4p/15$ per reduced input state, ensuring that the reduction in ancilla blocks does not compromise the overall error suppression performance of the protocol. 

Figure~\ref{fig:acceptance_rate} compares the success probability per round for the conventional $3$-to-$1$ protocol and the proposed $2$-to-$1$ protocol.
The dashed and dotted lines represent theoretical lower bounds on the success probability at each step.
These bounds are obtained by evaluating the probability that no errors occur at any circuit location capable of triggering a rejection.
In deriving the lower bound for the second step, we assume that all $X$-errors detectable in the first step have been perfectly identified.
A detailed derivation of these expressions is provided in Appendix~\ref{sec:appendix_derivation}.
Although the simulation data points lie slightly above these bounds due to undetected non-malicious errors that do not trigger rejection, they remain consistent with the theoretical predictions.

Crucially, the proposed $2$-to-$1$ scheme exhibits a consistently higher success probability than the conventional $3$-to-$1$ protocol over the entire range of physical error rates $p$ examined.
This improvement is primarily driven by the reduced circuit complexity of the proposed protocol.
Specifically, by requiring fewer ancilla qubits and transversal CNOT gates per distillation round, the total number of physical error locations is significantly minimized.
This reduction directly suppresses the accumulation of errors that lead to syndrome inconsistencies, thereby substantially increasing the overall round acceptance rate.

\begin{figure}[htb]
  \includegraphics[width=0.5\textwidth]{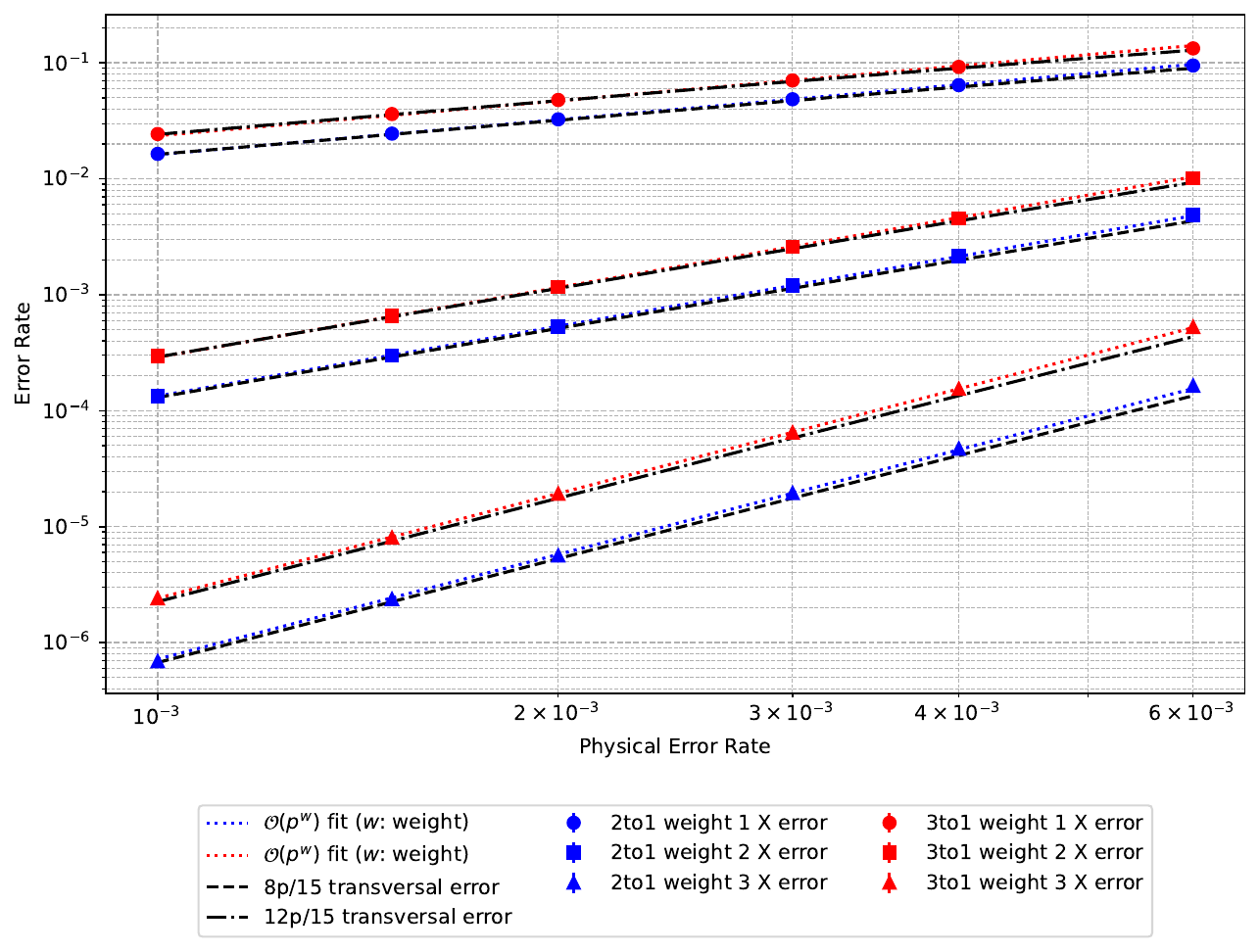}
  \caption{
  Weight distribution of residual $X$ errors
  in the output state $\ket{0_L^{\otimes 11}}$.
  The proposed protocol significantly suppresses
  low-weight $X$ errors compared to the conventional method.
  The black solid line represents the theoretical prediction
  $4mp/15$ for an $m$-to-$1$ protocol.
  }
  \label{fig:weight_distribution_x}
\end{figure}

\begin{figure}[htb]
  \includegraphics[width=0.5\textwidth]{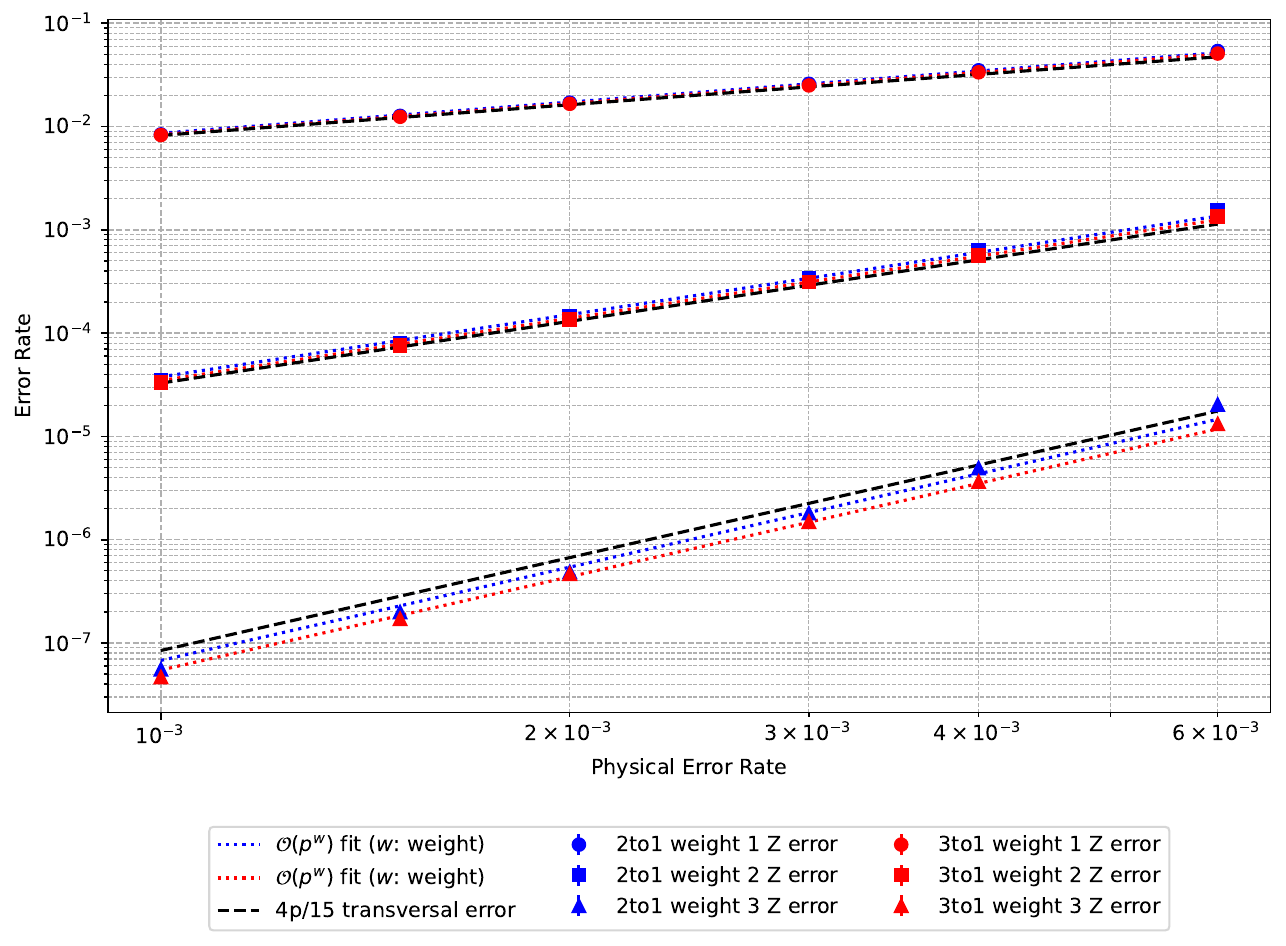}
    \caption{
    Weight distribution of residual $Z$ errors in the output state $\ket{0_L^{\otimes 11}}$.
    The results align with an effective $Z$-error rate of $4p/15$, indicating independence from the distillation protocol.
    }
  \label{fig:weight_distribution_z}
\end{figure}

\begin{figure}[htb]
  \includegraphics[width=0.5\textwidth]{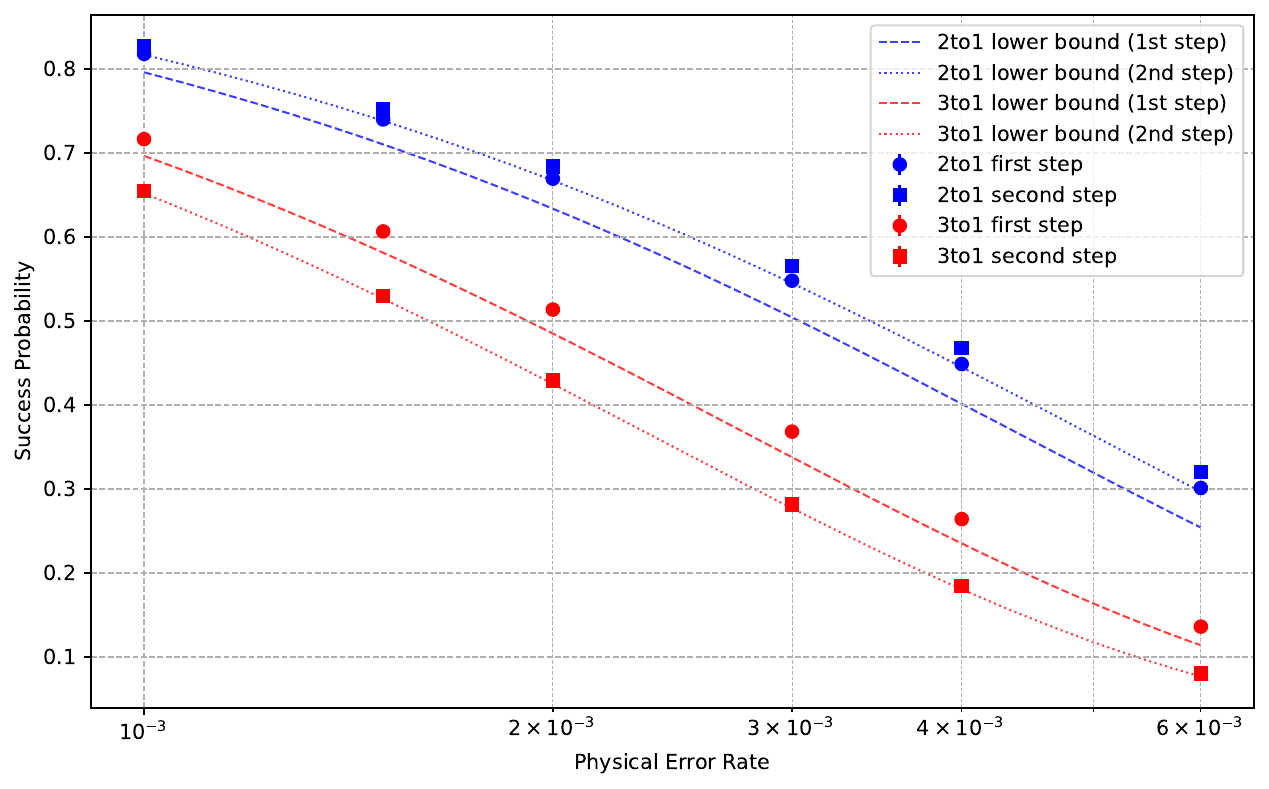}
    \caption{
    Success probability per distillation round for the conventional $3$-to-$1$ and proposed $2$-to-$1$ protocols for the quantum BCH [[31,11,5]] code.
    Red (blue) points correspond to the $3$-to-$1$ ($2$-to-$1$) protocol, while circle (triangle) markers represent the first (second) step.
    The dashed and dotted lines indicate the theoretical lower bounds of the success probability for each step, calculated based on the probability that no errors occur at any detectable locations (see Appendix~\ref{sec:appendix_derivation} for details).
    The proposed scheme achieves a higher success probability due to the effective suppression of error events.
    }
  \label{fig:acceptance_rate}
\end{figure}

\section{Threshold Analysis}
\label{sec:threshold_analysis}
In this section, we conduct a threshold analysis for the transversal CNOT gate, which is a cornerstone of fault-tolerant quantum computing (FTQC), specifically focusing on the performance of quantum BCH codes using our proposed initial state preparation method.

\subsection{Model and Noise Assumptions}
The circuit configuration utilized in this analysis is depicted in Figure~\ref{fig:circuit_config}. % 図のラベルに合わせて変更してください
Each cycle consists of a transversal CNOT gate applied to the initial states, followed by error correction using the error-correcting teleportation.
Here, we evaluate the expected logical error rate per cycle.
In this paper, a logical error is defined as the failure of at least one logical qubit within the code block.

The noise model adopts the parameters established in the numerical simulations in Section~\ref{sec:numerical}.
However, noise associated with the generated ancilla states ($|0_L^{\otimes k}\rangle, |+_L^{\otimes k}\rangle$) is treated as modeled transversal errors.
These errors primarily stem from two sources:
(i) errors induced by transversal CNOT gates in the entanglement distillation circuit, and
(ii) undetectable errors arising from the non-fault-tolerant preparation circuit.
Based on the weight distributions in Figure~\ref{fig:weight_distribution_x} and Figure~\ref{fig:weight_distribution_z}, the contribution of the latter is assumed to be negligible.
When an $m$-to-1 protocol is employed for $X$- and $Z$-error verification in the encoding circuit, the effective noise on the prepared ancilla states can be approximated as follows.
For $\ket{0_L^{\otimes k}}$, the physical $X$- and $Z$-error probabilities are given by $\frac{4m}{15}p$ and $\frac{4}{15}p$, respectively, while for $\ket{+_L^{\otimes k}}$ they are $\frac{4}{15}p$ and $\frac{4m}{15}p$.
Details of this approximation are provided in the Appendix~\ref{sec:appendix_mto1}.

\begin{figure}[htb]
  \includegraphics[width=0.5\textwidth]{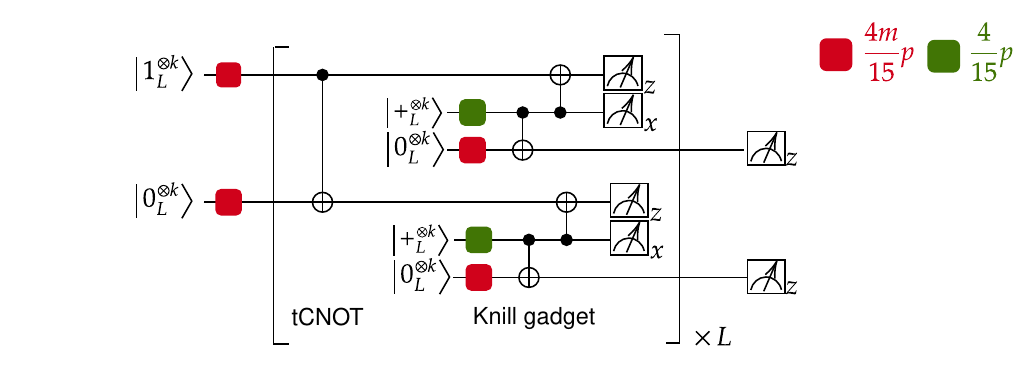} % width=\textwidth でOK（ページ幅いっぱいになる）
  \caption{Circuit configuration for the threshold analysis using $m$-to-1 $X/Z$-error verification in the entanglement distillation circuit.}
  \label{fig:circuit_config}
\end{figure}

\subsection{Analytical Methodology: Scaling Threshold}

%修正後(2026/05/08)
% memo:scaling thresholdの議論をどこまですべきか判断しかねたので，とりあえず詳細に記述．後から削る方針で．
% 第1段落：circuit levelの問題をcode capacityの問題に帰着
The transversality of the circuit allows us to reduce the circuit-level noise analysis to a code-capacity noise model, in which the physical error rate $p$ is upper-bounded by an effective transversal error rate $\gamma p$.
The coefficient $\gamma$ is defined in Appendix~\ref{sec:appendix_gamma}.
We therefore evaluate the logical error rate of each code under this effective noise model.

% 第2,3,4段落：使用するscaling thresholdの説明
Let $t$ denote the number of correctable errors.
Since a logical failure requires at least $t+1$ faults, the logical error rate can be expressed as
\[
    f(p)=a p^{t+1}+O(p^{t+2}),
\]
where $a$ is the leading-order coefficient.
Thus, in the low-error-rate regime, the logical error rate is primarily determined by the leading-order term $a p^{t+1}$.

For finite-size fault-tolerant constructions, a widely used performance metric is the pseudo-threshold, which is defined as the crossing point between the logical error rate $f(p)$ and the physical error rate $p$.
While useful as an overall performance metric, the pseudo-threshold can be affected by higher-order terms near the crossing point.
As a result, it may not accurately reflect the logical error rate in the low-error-rate regime, where one often wants to compare codes at a given physical error rate.

To characterize asymptotic logical error suppression at smaller physical error rates more directly, we introduce the scaling threshold $p_0$ as the crossing point between the leading-order approximation $a p^{t+1}$ and the physical error rate $p$.
It is given by
\[
    a p_0^{t+1}=p_0,
    \qquad
    p_0=a^{-1/t}.
\]
This metric captures the leading-order suppression of logical errors and provides a simple way to compare finite-size codes in the low-error-rate regime.
For the quantum BCH codes studied here, it yields a more conservative estimate than the corresponding pseudo-threshold.

% 第5段落：scaling thresholdを求める際の設定（MLDとPoltyrev bound）
To assess the ultimate potential of quantum BCH codes, we assume Maximum Likelihood Decoding (MLD).
Due to the computational intractability of direct MLD simulations for large codes, we employ the Poltyrev bound~\cite{poltyrev}.
This bound provides a tight upper bound on the logical error rate based on the code's weight distribution.
Using weight distribution data from the Online Encyclopedia of Integer Sequences (OEIS)~\cite{oeis_weight}, we calculated the thresholds for all quantum BCH codes of length $n = 2^m - 1$.

\subsection{Results and Discussion}
We analyze the threshold behavior of quantum BCH codes for all codes with length $n \le 127$ and selected codes with $n=255$.
Figure~\ref{fig:threshold} shows multiple data points for each code, corresponding to different levels of distillation-circuit optimization achieved by the systematic application of $m$-to-1 protocols at each step.
Higher optimization levels consistently yield improved thresholds, owing to the reduced number of CNOT gates in the entanglement distillation circuit, which is the dominant source of output errors.

The results indicate that quantum BCH codes achieve a favorable balance between code rate and error tolerance with realistic physical qubit counts.
In particular, codes such as $\llbracket 31,11,5 \rrbracket$, $\llbracket 63,27,7 \rrbracket$, $\llbracket 127,71,9 \rrbracket$, and $\llbracket 255,159,13 \rrbracket$ exhibit thresholds comparable to that of the Steane code, while their code rates increase by a factor of two to four as the code distance grows.

Across the evaluated codes, we observe a clear trade-off between code rate and threshold.
Within this trade-off, quantum BCH codes span a wide and flexible parameter region (see Figure~\ref{fig:threshold}), reflecting the intrinsic tunability of the BCH family in terms of block length and error-correcting capability.
These threshold results therefore provide practical guidance for designing high-rate fault-tolerant protocols.
Detailed numerical values are summarized in Table~\ref{tab:threshold_analysis} in the Appendix.

\section{Conclusion and Outlook}
In this work, we proposed an optimization method for efficient fault-tolerant preparation of logical Pauli states in quantum BCH codes by exploiting their cyclic symmetry.
This approach generally reduces the required number of ancilla qubits while improving both logical error rates and acceptance probabilities. 
When applied to specific quantum BCH codes, we succeeded in reducing the number of ancilla qubits by more than half, and our circuit-level noise simulations confirmed that these optimized protocols achieve the performance improvements predicted by our theoretical analysis.

In addition, we analyzed the scaling thresholds of the quantum BCH family up to $n=255$, demonstrating that these codes achieve a favorable balance between code rate and error tolerance. 
Specifically, codes like $[[255,159,13]]$ maintain thresholds comparable to the Steane code while increasing code rates by a factor of two to four as the distance grows. 
Our circuit-level optimization consistently improves these thresholds by reducing error accumulation during state preparation, providing practical guidance for designing high-rate and resource-efficient fault-tolerant protocols.

Several directions remain for future research to further enhance the practicality of quantum BCH codes. 
First, the effectiveness of our optimization depends on the structure of the initial non-FT circuits. 
Future studies should address the trade-off between gate count and circuit depth to mitigate idling errors, particularly through the development of automated compilers tailored to specific hardware constraints such as qubit movement costs in neutral-atom systems.

Second, while this study focused on the construction and distillation of these codes, the development of efficient fault-tolerant logical gates for the quantum BCH family is a crucial next step. 
Establishing a complete set of universal operations is essential for integrating these high-rate codes into large-scale quantum computing architectures. 
By addressing these architectural and operational challenges, quantum BCH codes can serve as a robust foundation for scalable and efficient quantum error correction.

\begin{figure*}[htb]
  \includegraphics[width=\textwidth]{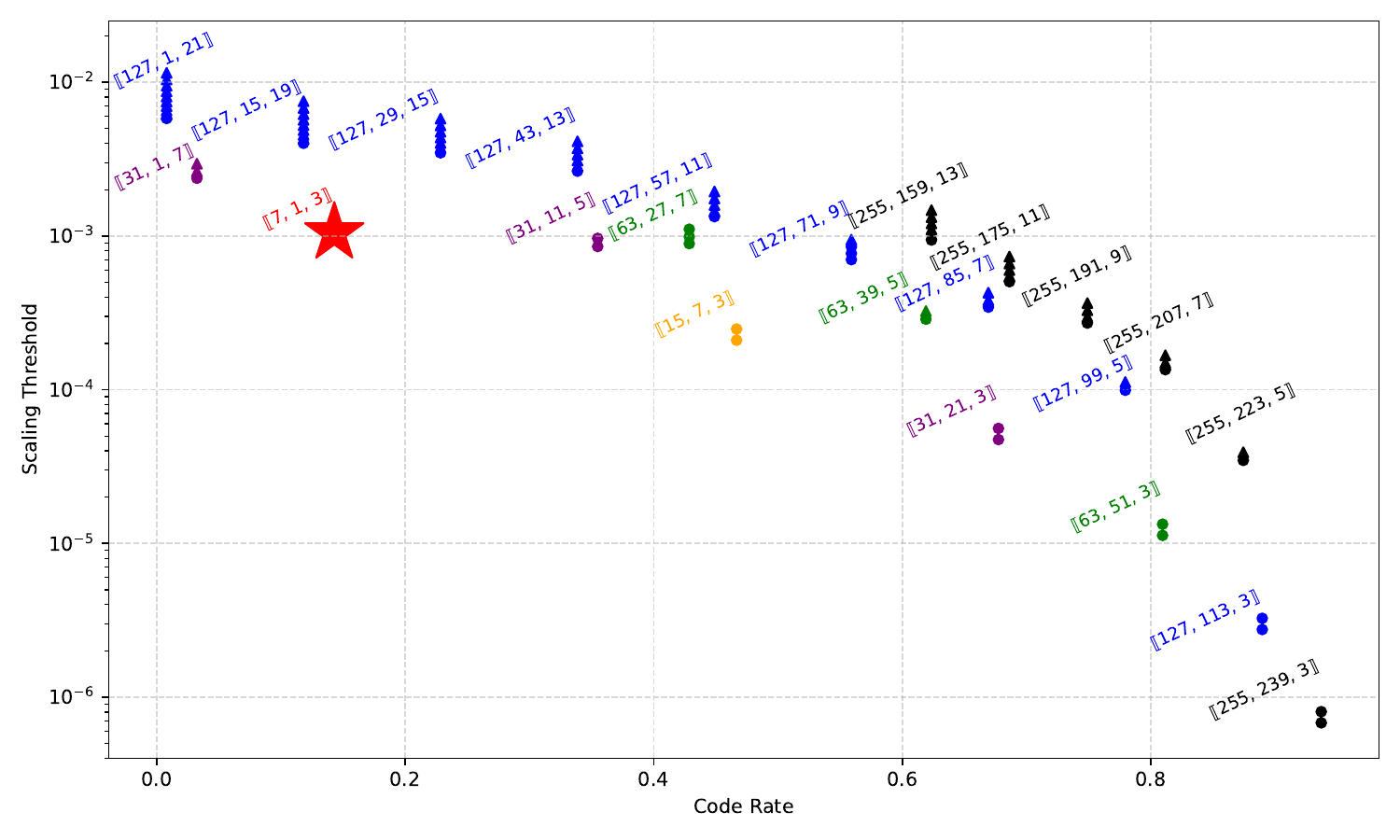} % width=\textwidth でOK（ページ幅いっぱいになる）
  \caption{Threshold analysis results for quantum BCH codes.
Multiple points are plotted for each code, where the threshold systematically improves from the conventional 
($t+1$)-to-$1$ protocol (bottom) to the optimized $2$-to-$1$ protocol (top).}
  \label{fig:threshold}
\end{figure*}

\begin{acknowledgments}
  This work is supported by MEXT Quantum Leap Flagship Program (MEXT Q-LEAP) Grant No. JPMXS0120319794, JST COI-NEXT Grant No. JPMJPF2014, JST Moonshot R\&D Grant No. JPMJMS2061, and JST CREST JPMJCR24I3.
\end{acknowledgments}

\bibliography{ref_bib}

%%%%%%%%%%%%%%%%%%%%%%%%%%%%%%%%%%%%%%%%%%%%%%%%%%%%%%%%%%%%%%%%%%%%%%%%%%%%%%%%%%%%%%%%%%%%%%%%%%%%%%%%%%%%%%%%%%%%%%%%%%%%%%%%%%%%%%%%%%%%%%%%%%%%%%%%%%%%%%%%%%%%%%%%%%%%%%
\appendix
\section{Effective Error Model of the Entanglement Distillation Circuit}
\label{sec:appendix_mto1}

In this appendix, we analyze the errors appearing at the output of the encoding circuits using an \emph{effective error model}, as illustrated in Fig.~\ref{fig:effective_error_model}.
This model allows us to evaluate the effective physical error rates by tracking how Pauli errors generated inside the circuit propagate to the output.
Specifically, Pauli errors arising from faulty gates and measurements are propagated through the circuit according to the standard error propagation rules, and their contributions are accumulated to obtain the effective error rates at the circuit output.

We first consider the distillation circuit of the logical zero state, where the $m_x$-to-1 and $m_z$-to-1 protocols are used for the X- and Z-error checks, respectively. 
The effective X- and Z-error rates of the output state are given by
\begin{equation}
  p_X = \frac{4 m_z}{15} p, \qquad
  p_Z = \frac{4}{15} p .
\end{equation}
The effective X-error rate $p_X$ is obtained by summing the contributions from Pauli error propagation through the transversal CNOT (tCNOT) gates in the distillation circuit.
X-type errors arise from the final $m_z$ tCNOT gates in the first step and from $m_z - 2$ tCNOT gates in the second step, each contributing $\{X \otimes I,\, X \otimes X\}$ with probability $2p/15$.
In addition, the final tCNOT gate in the second step contributes $\{X \otimes I,\, X \otimes X,\, Y \otimes I,\, Y \otimes X\}$ with probability $4p/15$.
In contrast, the effective Z-error rate $p_Z$ originates solely from the final tCNOT gate in the second step, where Pauli errors of the form $\{Y \otimes I,\, Y \otimes X,\, Z \otimes I,\, Z \otimes X\}$ propagate to the output and determine $p_Z$, rendering it independent of the specific distillation protocol.

The logical plus state can be analyzed in the same manner.
It is generated by preparing a logical zero state using a non-fault-tolerant circuit, followed by a transversal Hadamard gate and the entanglement distillation circuit, in which the Z-error check is performed prior to the X-error check.
The resulting effective error rates are given by
\begin{equation}
  p_X = \frac{4}{15} p, \qquad
  p_Z = \frac{4 m_x}{15} p .
\end{equation}

\begin{figure}[htb]
  \includegraphics[width=0.5\textwidth]{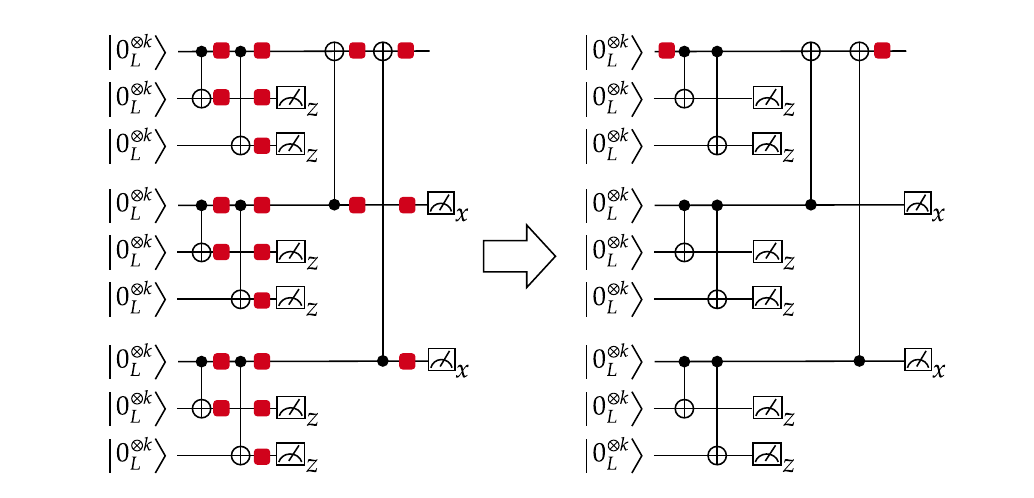}
  \caption{Effective error model for the entanglement distillation circuit.}
  \label{fig:effective_error_model}
\end{figure}

\section{Approximation Method for Circuits Used in the Threshold Analysis}
\label{sec:appendix_gamma}
In the threshold analysis presented in Sec.~\ref{sec:threshold_analysis},
we exploit the fact that the errors occurring in the circuit shown in Fig.~\ref{fig:circuit_config} are transversal.
By approximating the circuit-level noise model with an effective
code-capacity noise model,
the simulation cost can be significantly reduced.
In this appendix, we describe the concrete approximation procedure in detail.

Before analyzing the error propagation in the circuit,
we summarize how transversal errors are treated in our analysis.
Throughout this appendix, we consider only transversal operations
and transversal errors acting on a quantum error-correcting code
$\llbracket n,k,d=2t+1 \rrbracket$.
We first state a lemma concerning the composition of transversal errors.

\begin{lem}
As illustrated below, suppose that bit-flip errors with probabilities
$\alpha p$ and $\beta p$ occur sequentially.
Then, the resulting bit-flip error probability is upper bounded by
$(\alpha+\beta)p$.
\begin{figure}[H]
  \centering
  \includegraphics[width=0.5\textwidth]{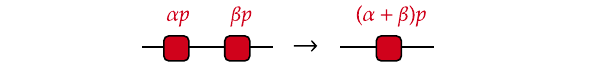}
  \label{fig:appendix_2_1}
\end{figure}
\end{lem}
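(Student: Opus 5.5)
Two independent bit-flip channels with flip probabilities $p_1=\alpha p$ and $p_2=\beta p$ compose on each qubit into a single bit-flip channel, so the plan is to compute the composite flip probability exactly and then bound it. Because the errors are transversal, each qubit of the $\llbracket n,k,d=2t+1\rrbracket$ block is acted on independently. It therefore suffices to analyze one physical qubit, where the first error location applies $X$ with probability $\alpha p$ and the second applies $X$ with probability $\beta p$, independently. The block-level statement then follows qubit by qubit, since an i.i.d.\ transversal channel is fully specified by its single-qubit marginal.

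On a single qubit I would use that $X^2=I$. The net effect is an $X$ flip exactly when an odd number of the two faults fires. Hence the composite flip probability is
\begin{equation}
q=\alpha p(1-\beta p)+\beta p(1-\alpha p)=(\alpha+\beta)p-2\alpha\beta p^2 .
\end{equation}
This is exactly the formula for the XOR of two independent Bernoulli variables.

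It remains to bound this. Since $\alpha,\beta\ge 0$ (and the probabilities lie in $[0,1]$), the correction term $-2\alpha\beta p^2$ is nonpositive, so $q\le(\alpha+\beta)p$. Equivalently, a union bound over the two fault events gives the same estimate directly. I would also note that the resulting channel is again an i.i.d.\ bit-flip channel on the block, now with rate at most $(\alpha+\beta)p$. This is what allows the lemma to be iterated along the circuit in Fig.~\ref{fig:circuit_config} to build up the coefficient $\gamma$.

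The only point needing care is interpretive rather than computational. One must make sure that "upper bounded" is meant for the composed channel to be dominated by a bit-flip channel of rate $(\alpha+\beta)p$. For the later code-capacity reduction this domination must also hold in the sense of logical error rates. I would address this by observing that an i.i.d.\ channel with smaller flip probability $q\le(\alpha+\beta)p$ can be simulated by the rate-$(\alpha+\beta)p$ channel followed by independently undoing some flips. With that, monotonicity of the logical error rate under MLD in the low-$p$ regime (leading term $a p^{t+1}$) justifies substituting the upper bound.
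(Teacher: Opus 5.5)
Your proof is correct and follows the paper's proof essentially line for line. Like the paper, you use transversality to reduce to a single qubit, compute the composite flip probability $\alpha p(1-\beta p)+\beta p(1-\alpha p)=(\alpha+\beta)p-2\alpha\beta p^2$, and drop the nonpositive cross term. Your final paragraph goes beyond what the lemma asserts, and the paper does not address that point. The ``undoing flips'' you describe is a thinning coupling of error patterns, not a physical channel, so it supports substituting the bound only at leading order in $p$, as you indicate.
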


\begin{proof}
Since the errors are transversal, they act independently on each physical
qubit. It is therefore sufficient to analyze a single physical qubit.
If two bit-flip errors occur sequentially with probabilities $\alpha p$ and
$\beta p$, respectively, the output bit-flip probability is
\[
\alpha p(1-\beta p) + (1-\alpha p)\beta p
= (\alpha+\beta)p - 2\alpha\beta p^2
\leq (\alpha+\beta)p .
\]
Thus, the sequential error process is upper bounded by a single bit-flip
error process with probability $(\alpha+\beta)p$.
\end{proof}

Next, we consider error propagation through a transversal CNOT gate.

\begin{lem}
Consider a two-qubit depolarizing noise acting on the CNOT gate
as illustrated below with probability $p$.
Focusing on $X$ errors,
the marginal probability of an $X$ error occurring
at each of the locations $a$, $b$, and $c$
is exactly $4p/15$.
Moreover, the bit-flip error model obtained by treating these locations
as independent provides an upper bound on the original error probability.
\begin{figure}[H]
  \includegraphics[width=0.5\textwidth]{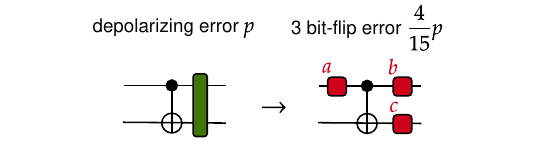}
  \caption{Decomposition of two-qubit depolarizing noise, focusing only on X errors.}
  \label{fig:decompotion}
\end{figure}
\end{lem}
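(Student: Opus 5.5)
The plan is to reduce everything to a single pair of physical qubits, using the same reasoning as the preceding lemma: the errors are transversal, so they act independently on each qubit pair. The statement then splits into two parts. First, a counting argument gives the exact marginals. Second, a comparison between two distributions on three binary fault variables gives the bound.

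For the marginals, I take the 15 nontrivial Paulis $P_1\otimes P_2$, each occurring with probability $p/15$, and record only their $X$-components. These are the pair of bits saying whether $P_i\in\{X,Y\}$. The pattern $X\otimes I$ (control flipped, target not) arises from $P_1\in\{X,Y\}$ and $P_2\in\{I,Z\}$, which gives 4 Paulis. The same count of 4 holds for $I\otimes X$ and for $X\otimes X$. The remaining 3 Paulis ($IZ$, $ZI$, $ZZ$) carry no $X$-component. I identify location $a$ with the flip $X\otimes I$, location $b$ with $I\otimes X$, and location $c$ with the correlated flip $X\otimes X$, using whatever convention the figure fixes for where each sits relative to the CNOT. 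With $q=4p/15$, the $X$-part of the depolarizing channel is then exactly the mixture in which $a$, $b$ or $c$ fires with probability $q$ each, mutually exclusively, and nothing fires with probability $1-3q$. Each marginal is therefore exactly $4p/15$. This settles the first claim.

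For the bound, I replace this exclusive mixture by the model in which $a$, $b$, $c$ fire independently, each with probability $q$, and compare the two as distributions over fault sets $S\subseteq\{a,b,c\}$.
\begin{itemize}
\item Under the exclusive model, $\Pr(S)=q$ for $|S|=1$ and $\Pr(S)=0$ for $|S|\ge 2$.
\item Under the independent model, every nonempty $S$ receives probability $q^{|S|}(1-q)^{3-|S|}$.
\item At each single location, the independent model flips a qubit with probability at most $2q$ (via $a$ or $c$, respectively $b$ or $c$), matching the exclusive value $8p/15$. This mirrors the first lemma, where sequential flips compose to at most $(\alpha+\beta)p$.
\end{itemize}
From these facts I would argue that, for any failure event expressed through fault locations (e.g.\ "at least $t+1$ effective flips on the block"), the independent model assigns at least the probability of the exclusive model at the relevant leading order. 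The reason is that at leading order both models agree exactly on single-fault sets, and the independent model additionally assigns positive weight to multi-fault sets that the exclusive model forbids. Combining this with the first lemma, the output bit-flip rates of the independent model then dominate those of the true channel, which is what is needed to push the model through the later tCNOT layers.

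The main obstacle is the precise sense of ``upper bound.'' Literally, $q(1-q)^2<q$, so the independent model gives slightly less weight to each single-fault set. Consequently, for a monotone event containing two singletons, the independent model gives $2q-q^2$ versus $2q$ for the exclusive model. The dominance therefore fails at order $q^2$ unless it is stated carefully. My first attempt would be to show that the leading coefficient $a$ in $f(p)=ap^{t+1}+O(p^{t+2})$ is no smaller under the independent model, since that is all the scaling threshold uses. This works because the count of weighted minimal fault configurations can only grow when multi-location faults are permitted. If the exact inequality is needed instead, I would inflate the independent firing probability to $q'=q/(1-q)^2$ so that single-fault sets dominate exactly, and absorb the change into $O(p^2)$.
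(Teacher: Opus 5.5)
Your computation of the marginals is the same as the paper's. You split the twelve $X$-carrying Paulis into three groups of four, one per location, which gives $4p/15$ each. Your remark that the $X$-part is an exclusive mixture with weight $1-3q$ on ``no flip'' is left implicit in the paper.

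For the second claim the routes diverge. The paper argues only that independence admits extra simultaneous patterns, and asserts that the per-location probability thereby becomes $4p/15+(4p/15)^2$, hence a uniform upper bound. You instead notice that literal dominance fails at second order. You settle either for a leading-order statement, which is all that $f_0(p)=kp^{t+1}$ and the scaling threshold use, or for an inflated rate.

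Your caveat is correct, and an exact computation makes it sharp. Flips from $a$ and $c$ cancel on the control, and flips from $b$ and $c$ cancel on the target. So under independent firing at rate $q$, the net $X$-output of one CNOT is $XI$, $IX$, $XX$, each with probability $q(1-q)^2+q^2(1-q)=q(1-q)$. That is precisely the true channel with $q$ replaced by $q(1-q)<q$. The extra simultaneous patterns the paper invokes cancel rather than add. The independent model is therefore slightly \emph{less} noisy than the real one and agrees with it only at leading order. The paper's ``uniform'' bound is really a leading-order statement, which is the form you end up defending.

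This identity is also the cleanest way to tighten your own argument, which has a few loose spots.

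First, treating failure as a monotone event over fault sets ignores cancellation: $\{a,c\}$ produces $IX$, and $\{a,b,c\}$ produces nothing. What secures the leading order is that every multi-location combination at one gate reproduces a pattern already reachable with at most one fault, at a higher power of $q$. The leading coefficient is then equal, not merely no smaller.

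Second, your third bullet, made exact, gives a per-qubit flip rate $2q(1-q)\le 2q$. This points the wrong way for your later sentence saying the independent model's rates ``dominate.'' Equality with $8p/15$ holds only for the additive surrogate from the first lemma.

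Third, for the exact version the right criterion is $r(1-r)\ge q$ on net patterns, i.e.\ $r\ge(1-\sqrt{1-4q})/2=q+q^2+2q^3+\cdots$. Your choice $r=q/(1-q)^2$ meets this. It does not meet your stated single-fault criterion, since $r(1-r)^2=q-4q^3+O(q^4)$.

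Finally, to turn per-gate dominance into a bound on the logical failure rate, compose channels. The exclusive mixture at rate $q$ followed by an independent one at rate $s$ is the exclusive mixture at rate $q+s-4qs$. So any rate $q''\ge q$ equals the true channel followed by extra independent noise at rate $(q''-q)/(1-4q)$. Such extra noise cannot lower the failure probability under the maximum-likelihood decoding the paper assumes, because the decoder could simulate it itself.
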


\begin{proof}
The two-qubit depolarizing noise uniformly applies
one of the $15$ nontrivial two-qubit Pauli errors.
Among them, the following $12$ Pauli errors induce $X$ errors:
\begin{align}
\label{eq:pauli_a}
&X\otimes I,\; X\otimes Z,\; Y\otimes I,\; Y\otimes Z, \\
\label{eq:pauli_b}
&I\otimes X,\; I\otimes Y,\; Z\otimes X,\; Z\otimes Y, \\
\label{eq:pauli_c}
&X\otimes X,\; X\otimes Y,\; Y\otimes X,\; Y\otimes Y .
\end{align}
The Pauli errors in Eqs.~\eqref{eq:pauli_a}, \eqref{eq:pauli_b}, and \eqref{eq:pauli_c}
induce single-qubit bit-flip errors
at locations $a$, $b$, and $c$, respectively, as shown in Fig.~\ref{fig:decompotion}.
Since there are four such Pauli errors for each location,
the probability of a bit-flip error at each location is $4p/15$.

We now introduce an effective error model
in which independent bit-flip errors occur
at each location with probability $4p/15$.
This independence assumption allows error patterns
that are not present in the original depolarizing noise,
such as simultaneous independent errors at multiple locations.
As a result, the effective error probability at each location
becomes $4p/15 + (4p/15)^2$.
Therefore, this model uniformly upper bounds
the error probability induced by the original depolarizing noise.
\end{proof}
Using the above lemmas, we obtain the following theorem.
\begin{thm}
For the circuit in Fig.~\ref{fig:circuit_config}, assume two-qubit depolarizing noise $p$ after each CNOT, and bit-flip errors on logical $|0\rangle$, $|+\rangle$ preparations and measurements with probabilities $\alpha p$, $\beta p$, and $p$, respectively. The logical error probability is upper-bounded by a code-capacity model with independent transversal error probability $\gamma p$. Here, $\gamma = ( A^{t+1} + B^{t+1} - C^{t+1} - D^{t+1} )^{1/(t+1)}$ with
\begin{equation}
\label{eq:def_gamma_terms}
\begin{aligned}
A &= 8c_1 + \alpha + 2\beta + 1, \\
B &= 10c_1 + 2\alpha + 3\beta + 1, \\
C &= 3c_1 + \alpha + \beta, \quad D = c_1 + \beta,
\end{aligned}
\end{equation}
where $c_1 = 4/15$.
\end{thm}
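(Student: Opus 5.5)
We reduce the circuit in Fig.~\ref{fig:circuit_config} to a code-capacity model in three stages. First we replace every noise source by independent bit-flip locations. Then we push all of them to a small set of reference cuts. Finally we compare the leading-order failure probability with that of a single $\gamma p$ channel. We treat $X$ errors explicitly; $Z$ errors follow by the symmetric argument with the roles of $\ket{0_L^{\otimes k}}$ and $\ket{+_L^{\otimes k}}$ exchanged, and we take the worse of the two.

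\textbf{Step 1: independent bit-flip locations.} Using the second lemma, each two-qubit depolarizing channel after a transversal CNOT is replaced by three independent bit-flip locations $a,b,c$, each with probability $c_1 p$ and $c_1=4/15$. This replacement upper bounds the original channel. Preparation and measurement errors are already bit flips with probabilities $\alpha p$, $\beta p$ and $p$.

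\textbf{Step 2: propagation and merging.} Each location is propagated forward through the transversal CNOTs using $X\otimes I\mapsto X\otimes X$ and $I\otimes X\mapsto I\otimes X$. Propagation is transversal, so each qubit index is handled separately. An $X$ error on a control block is copied onto the target block. After propagation, each location contributes a bit flip to a fixed subset of the blocks whose measurement outcomes enter the error-corrected teleportation decoders.

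We then apply the first lemma repeatedly. Errors that land on the same block merge into a single bit-flip channel whose rate is bounded by the sum of their coefficients. Counting coefficients block by block should produce the four aggregated rates $Ap$, $Bp$, $Cp$ and $Dp$. For example, $A=8c_1+\alpha+2\beta+1$ would count eight CNOT locations, one $\ket{0}$ preparation, two $\ket{+}$ preparations and one measurement feeding a given decoder. $B$ would be the analogous total for the larger region seen by the output block.

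\textbf{Step 3: union bound with inclusion--exclusion.} A logical failure in the cycle means that some decoder sees at least $t+1$ flips on its effective block. The failure probability is then at most the union over the two decoding regions, $A$ and $B$. The leading-order term of a code-capacity model at rate $xp$ is $a(xp)^{t+1}$, with the same weight-distribution coefficient $a$ for every region. Summing the $A$ and $B$ regions double counts configurations lying entirely in their shared sub-regions, which have aggregated rates $C$ and $D$. Subtracting $C^{t+1}$ and $D^{t+1}$ removes this double counting at leading order. The total is therefore $a p^{t+1}(A^{t+1}+B^{t+1}-C^{t+1}-D^{t+1})$, which equals $a(\gamma p)^{t+1}$ with $\gamma$ as stated.

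\textbf{Main obstacle.} The hardest part will be making Step 3 rigorous. We must justify that the failure events of the two regions decompose so that the overlap corrections enter with exactly the $(t+1)$-th powers, and that the subtraction does not destroy the upper bound. We would first try to argue at leading order: weight-$(t+1)$ error sets are counted by multinomial expansion, and $A^{t+1}-C^{t+1}$ counts exactly the sets that use at least one location outside the overlap. Verifying that the coefficient tallies match the circuit figure is routine but error-prone bookkeeping.
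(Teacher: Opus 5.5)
Your Steps 1 and 2, and the leading-order device in Step 3, match what the paper does. That device assigns each event a term $a(xp)^{t+1}$ with a common coefficient and merges the signed terms into a single effective rate. The gap is where $D^{t+1}$ comes from.

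Within one cycle there are only two failure events: the upper and lower error-correction steps. Inclusion--exclusion for their union has a single intersection term. The paper's $C=3c_1+\alpha+\beta$ is exactly the first-order rate of that intersection, $A_n\cap B_n$. There is no second overlap of the same two regions for $D$ to belong to. Suppose instead you split the shared region into two disjoint parts of rates $C$ and $D$. The same leading-order heuristic would then give $(C+D)^{t+1}$, not $C^{t+1}+D^{t+1}$, because a weight-$(t+1)$ failure set can mix locations from both parts. So your Step 3 back-fits the target formula rather than deriving it. Done consistently for an isolated cycle, it only yields the weaker coefficient $A^{t+1}+B^{t+1}-C^{t+1}$.

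The missing idea is that the bound is a per-round statement over a long run of $N$ cycles. In that setting $D$ measures the correlation between \emph{consecutive} rounds, not a second overlap inside one round. The paper proceeds as follows:
\begin{itemize}
\item It defines events $A_n,B_n$ for every round, plus $A_{\mathrm{END}},B_{\mathrm{END}}$ for the final measurement.
\item It expands the union over all rounds by inclusion--exclusion.
\item It keeps the consecutive-round intersections $A_{n-1}\cap A_n$ and $B_{n-1}\cap B_n$, each assigned first-order rate $Dp$ with $D=c_1+\beta$.
\item It also keeps the triple intersection $A_{n-1}\cap B_{n-1}\cap A_n$, again of rate $Dp$.
\end{itemize}
The bulk rounds are translationally invariant, so dividing by $N$ and letting $N\to\infty$ removes the boundary terms. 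This leaves the per-round coefficient $A^{t+1}+B^{t+1}-C^{t+1}-2D^{t+1}+D^{t+1}=\gamma^{t+1}$. Without this multi-round bookkeeping there is nothing that justifies subtracting $D^{t+1}$, so your argument does not reach the stated $\gamma$.
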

\begin{proof}
By Lemmas~1 and~2, physical errors in the circuit of Fig.~\ref{fig:circuit_config} can be mapped to equivalent transversal errors, as shown in Fig.~\ref{fig:proof_1}.
We analyze the rounds in which these errors induce logical failures.
Let $A_n$ ($B_n$) denote the event that a logical error occurs during the upper (lower) error-correction step of round $n$.
Let $A_{\mathrm{END}}$ and $B_{\mathrm{END}}$ denote the events that a logical error occurs during the final measurement.
Figure~\ref{fig:proof_2} summarizes the correspondence between physical errors and logical error events.

Using the inclusion--exclusion principle, the total logical error probability $f(p)$ is written as
\begin{align}
f(p)
&=
p\!\left(
\bigcup_{n=1}^{N} A_n
\cup
\bigcup_{n=1}^{N} B_n
\cup A_{\mathrm{END}}
\cup B_{\mathrm{END}}
\right) \notag \\
&=
\sum_{n=1}^{N}
\bigl[
p(A_n) + p(B_n) - p(A_n \cap B_n)
\bigr] \notag \\
&\quad
-
\sum_{n=2}^{N}
\bigl[
p(A_{n-1} \cap A_n)
+ p(B_{n-1} \cap B_n) \notag \\
&\qquad\qquad
- p(A_{n-1} \cap B_{n-1} \cap A_n)
\bigr] \notag \\
&\quad
- p(A_N \cap A_{\mathrm{END}})
- p(B_N \cap B_{\mathrm{END}}) \notag \\
&\quad
+ p(A_{\mathrm{END}})
+ p(B_{\mathrm{END}}).
\label{eq:total_error_expansion}
\end{align}

\begin{figure}[htb]
  \includegraphics[width=0.5\textwidth]{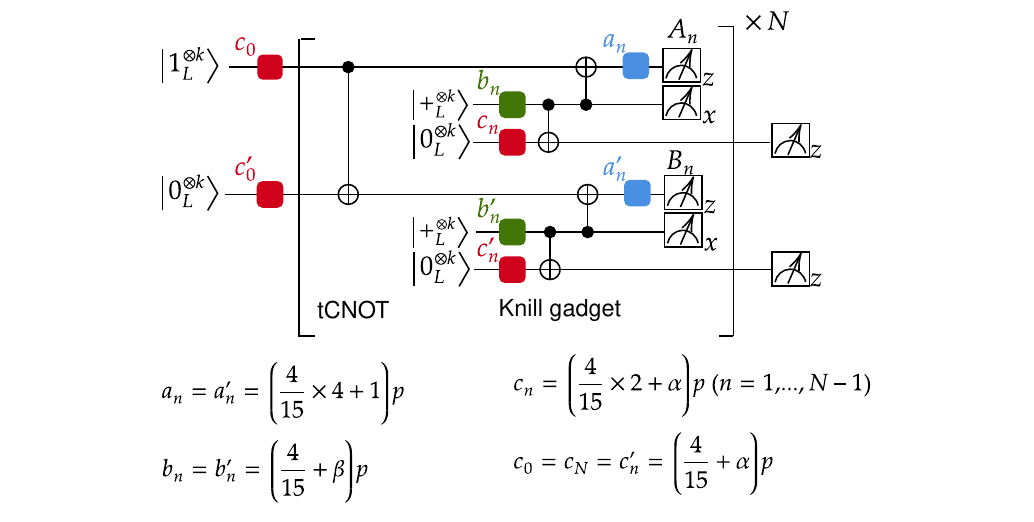}
  \caption{Error decomposition result from Fig.~\ref{fig:circuit_config}.}
  \label{fig:proof_1}
\end{figure}

\begin{figure}[htb]
  \includegraphics[width=0.5\textwidth]{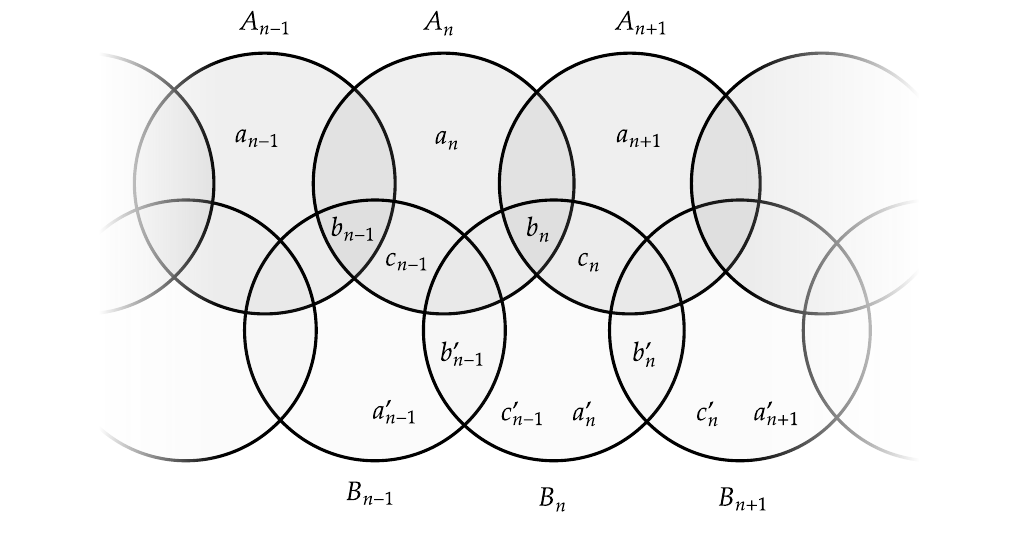}
  \caption{The relationship between each error and error event.}
  \label{fig:proof_2}
\end{figure}

Based on the correspondence shown in Fig.~\ref{fig:proof_2}, we observe that the error mechanisms in the intermediate rounds are translationally invariant.
Therefore, the probabilities of individual error events and their intersections are independent of the round index $n$ for $2 \le n \le N$.
Using the constants defined in Eq.~(\ref{eq:def_gamma_terms}), we can express these first-order physical error probabilities $p_1(\cdot)$as constant values:
\begin{equation}
\begin{aligned}
p_1(A_n) &= Ap, \\
p_1(B_n) &= Bp, \\
p_1(A_n \cap B_n) &= Cp, \\
p_1(A_{n-1} \cap A_n) &= p_1(B_{n-1} \cap B_n) = Dp, \\
p_1(A_{n-1} \cap B_{n-1} \cap A_n) &= Dp.
\end{aligned}
\end{equation}
Note that these relations hold for any $n$ within the range $2 \le n \le N$.

Using these translationally invariant probabilities, the total error probability $f(p)$ can be approximated by a term proportional to $N$ and a boundary term $p_0$ representing the edge effects.
For sufficiently large $N$, the contribution of the boundary term $p_0$ becomes negligible compared to the linear growth of the bulk terms.
Thus, taking the limit of large $N$, the error probability per round converges to:
\begin{align}
\frac{f(p)}{N}
&=
p(A_n) + p(B_n) - p(A_n \cap B_n) \notag \\
&\quad
- p(A_{n-1} \cap A_n) - p(B_{n-1} \cap B_n) \notag \\
&\quad
+ p(A_{n-1} \cap B_{n-1} \cap A_n) \notag \\
\end{align}

For a quantum code with distance $d=2t+1$, the scaling curve of the logical error probability is given by
\begin{equation}
f_0(p) = k p^{t+1},
\end{equation}
where $k$ is a constant.
For arbitrary real coefficients $\lambda_i$ and signatures $s_i \in \{-1, 1\}$,
the relation
\begin{equation}
\sum_i s_i f_0(\lambda_i p)
=
f_0\!\left(
\Bigl(\sum_i s_i \lambda_i^{t+1}\Bigr)^{\frac{1}{t+1}} p
\right)
\end{equation}
holds.
This identity allows multiple independent first-order error sources to be combined into a single effective error rate.

Combining the above results,
the logical error probability of the circuit
can be upper bounded as
\begin{equation}
\frac{f(p)}{N}
\le
f_0(\gamma p).
\end{equation}
The inequality arises because
some logical errors occurring during logical plus-state preparation
do not affect subsequent circuit operations.
The effective error rate $\gamma$
is given by the expression stated in the theorem,
which completes the proof.
\end{proof}

\section{Derivation of Theoretical Lower Bounds for Success Probability}
\label{sec:appendix_derivation}

This section provides the derivation of the theoretical lower bounds for the success probability in each distillation step.
The lower bound is defined as the probability that no errors occur at any potential error source that can be detected during the distillation process.
Figure ~\ref{fig:CNOT_table} summarizes the ancilla preparation workflow and the corresponding noise models used for these derivations.
While we specifically consider the $m$-to-$1$ protocol using the $[[31, 11, 5]]$ quantum BCH code as an example, the calculation is straightforward and can be generalized to other codes by adjusting the circuit parameters.

First, we consider all detectable error sources in the first step, which focuses on $X$-error detection.
The probability that no $X$-errors occur during state preparation and measurement (SPAM), denoted as $P^{(X)}_{\text{SPAM}}$, is determined by the number of sensitive locations in the non-fault-tolerant (non-FT) circuit.
As shown in the non-FT circuit in Figure ~\ref{fig:circuit_31}, the state preparation step begins with $10$ Hadamard gates for the $X$-stabilizers, leading to $21$ potential error locations per ancilla.
Considering $m$ ancilla qubits and the $m-1$ measurement locations where $X$-errors are detectable, the probability is given by
\begin{equation}
P^{(X)}_{\text{SPAM}} = (1 - p)^{21m} (1 - p)^{31(m-1)}.
\end{equation}

Next, we evaluate the probability of no detectable errors within the non-FT circuits, denoted as $P^{(X)}_{\text{non-FT}}$.
For a single-qubit depolarizing error following a Hadamard gate, an $X$-type error (either $X$ or $Y$) occurs with a probability of $2/3p$.
For two-qubit depolarizing errors following CNOT gates, the noise table in Figure ~\ref{fig:CNOT_table} shows that $12$ out of $15$ possible error patterns (indicated by orange cells) contain at least one $X$-error.
Given that the non-FT circuit contains $10$ Hadamard gates and $73$ CNOT gates, and there are $m$ such circuits in the first step, we obtain
\begin{equation}
P^{(X)}_{\text{non-FT}} = \left(1 - \frac{2}{3}p\right)^{10m} \left(1 - \frac{12}{15}p\right)^{73m}.
\end{equation}

Finally, we consider the errors arising from transversal CNOT gates, $P^{(X)}_{\text{trans(1st)}}$.
We distinguish between the last transversal CNOT and the preceding ones because errors on the control qubits of the final CNOT do not need to be considered for $X$-detection.
As shown in the lower-right noise tables of Figure ~\ref{fig:CNOT_table}, the last transversal CNOT contributes $12/15p$ per gate, while the other $m-1$ steps contribute $8/15p$ per gate based on the orange-colored Pauli configurations.
The resulting probability for the $31$ qubits is
\begin{equation}
P^{(X)}_{\text{trans(1st)}} = \left(1 - \frac{12}{15}p\right)^{31} \left(1 - \frac{8}{15}p\right)^{31(m-1)}.
\end{equation}
The total lower bound for the success probability in the first step is thus $P^{(X)} = P^{(X)}_{\text{SPAM}} P^{(X)}_{\text{non-FT}} P^{(X)}_{\text{trans(1st)}}$.

For the second step, we derive the lower bound $P^{(Z)}$ by considering $Z$-error detection.
To simplify the discussion, we assume that any non-malicious errors that passed the first step have a negligible impact on the second step's lower bound.
The error sources in this step are essentially complementary to those in the first step, corresponding to the blue cells in the noise tables of Figure ~\ref{fig:CNOT_table}.
Following the same methodology, the probabilities for each component are estimated as
\begin{equation}
P^{(Z)}_{\text{SPAM}} = (1 - p)^{10m} (1 - p)^{31(m-1)},
\end{equation}
\begin{equation}
P^{(Z)}_{\text{non-FT}} = \left(1 - \frac{1}{3}p\right)^{10m} \left(1 - \frac{3}{15}p\right)^{73m},
\end{equation}
\begin{equation}
P^{(Z)}_{\text{trans(1st)}} = \left(1 - \frac{4}{15}p\right)^{31} \left(1 - \frac{2}{15}p\right)^{31(m-1)},
\end{equation}
\begin{equation}
P^{(Z)}_{\text{trans(2nd)}}=\left(1 - \frac{12}{15}p\right)^{31} \left(1 - \frac{8}{15}p\right)^{31(m-1)}.
\end{equation}
Note that the transversal component for the second step also accounts for the propagation of transversal errors from the first step.
The overall success probability for the second step is then given by $P^{(Z)} = P^{(Z)}_{\text{SPAM}} P^{(Z)}_{\text{non-FT}} P^{(Z)}_{\text{trans(1st)}}P^{(Z)}_{\text{trans(2nd)}}$.

\begin{figure}[htb]
  \includegraphics[width=0.5\textwidth]{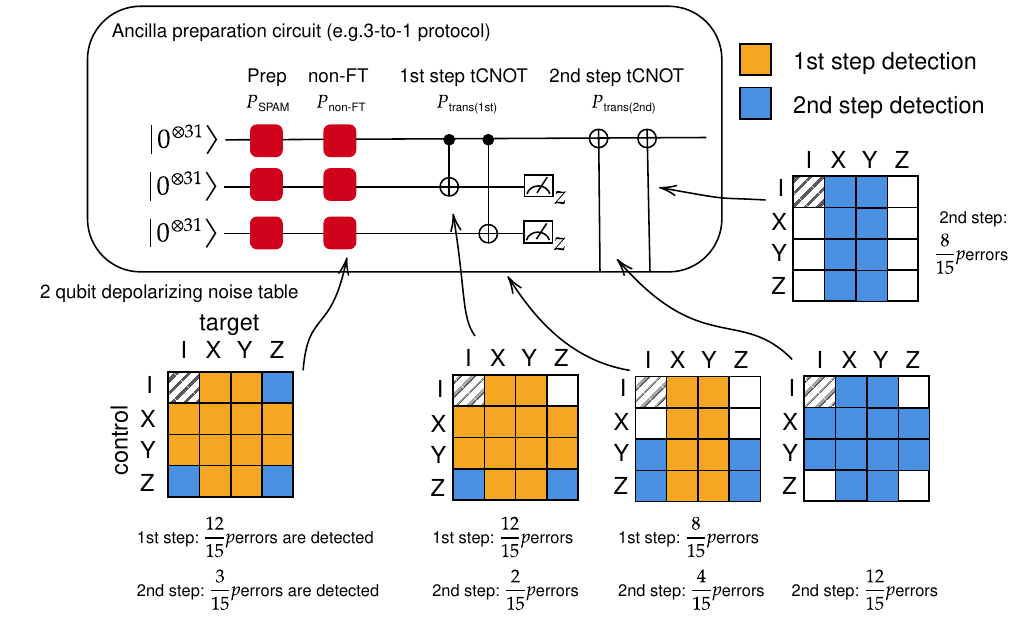}
  \caption{Ancilla preparation circuit and error detection patterns.
The upper panel shows the circuit components for $m$-to-$1$ distillation, with labels $P$ indicating the error sources considered for the lower bound calculation.
The lower panels provide the two-qubit depolarizing noise tables, where cells are color-coded by their detection step (orange: 1st step; blue: 2nd step).
The values $k/15$ denote the detection rates derived from the number of detectable Pauli strings among the $15$ possible two-qubit errors.}
  \label{fig:CNOT_table}
\end{figure}

\clearpage
\onecolumngrid % ★ここから1段組（ページ幅いっぱい）モードに切り替え
\section{Non-FT Circuit Layouts}
\label{sec:appendix_circuit}

% 1つ目の図
\begin{figure}[H] 
  \centering
  \includegraphics[width=\textwidth, height=\textheight, keepaspectratio]{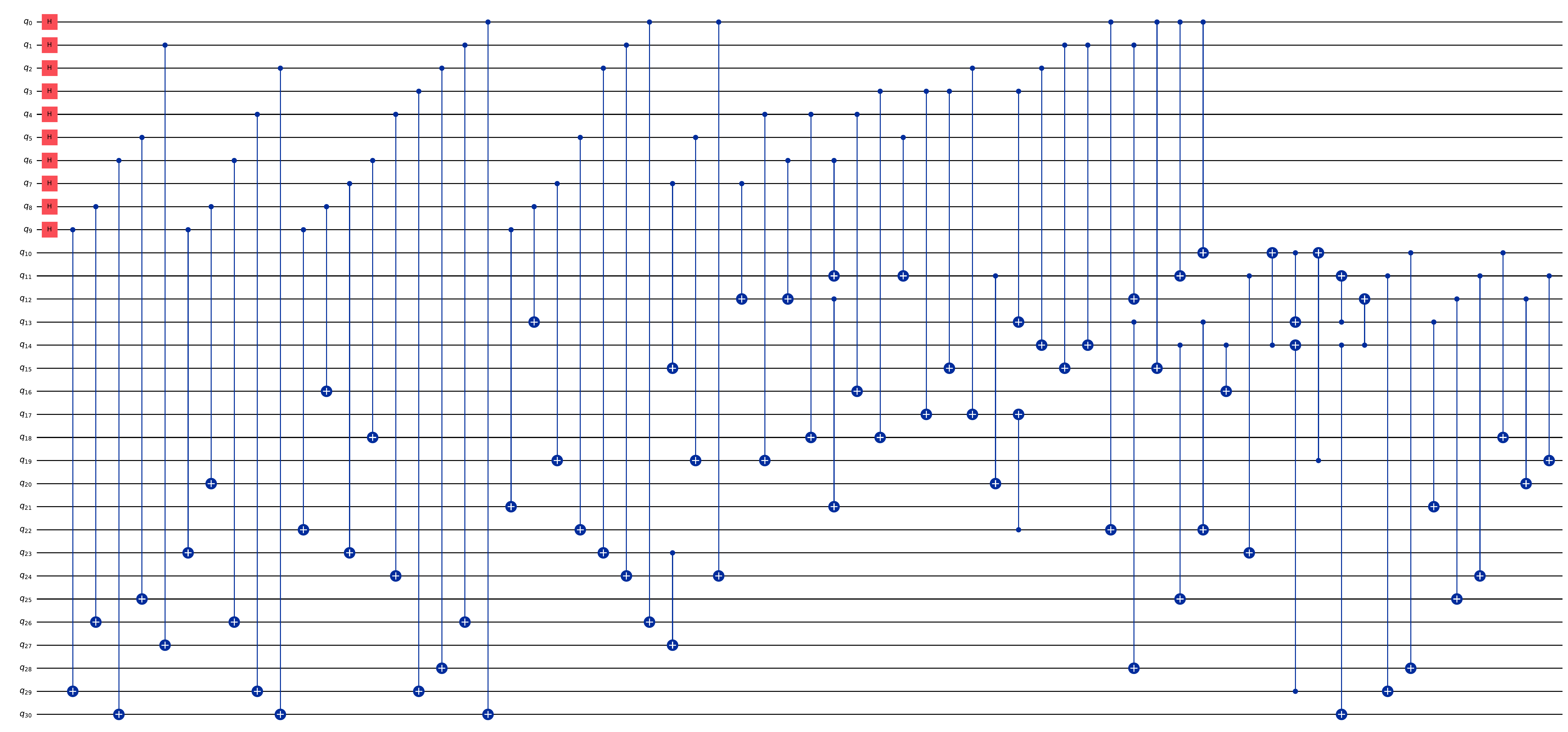}
  % ↑【注意】回転するので、widthにtextheight(縦幅)、heightにtextwidth(横幅)を指定するのがコツです
  \caption{Non-fault-tolerant circuit layout for the quantum BCH code $\llbracket 31,11,5\rrbracket$.}
  \label{fig:circuit_31}
\end{figure}

\clearpage

% 2つ目の図
\begin{sidewaysfigure*}[p]
  \centering
  \includegraphics[width=\textheight, height=\textwidth, keepaspectratio]{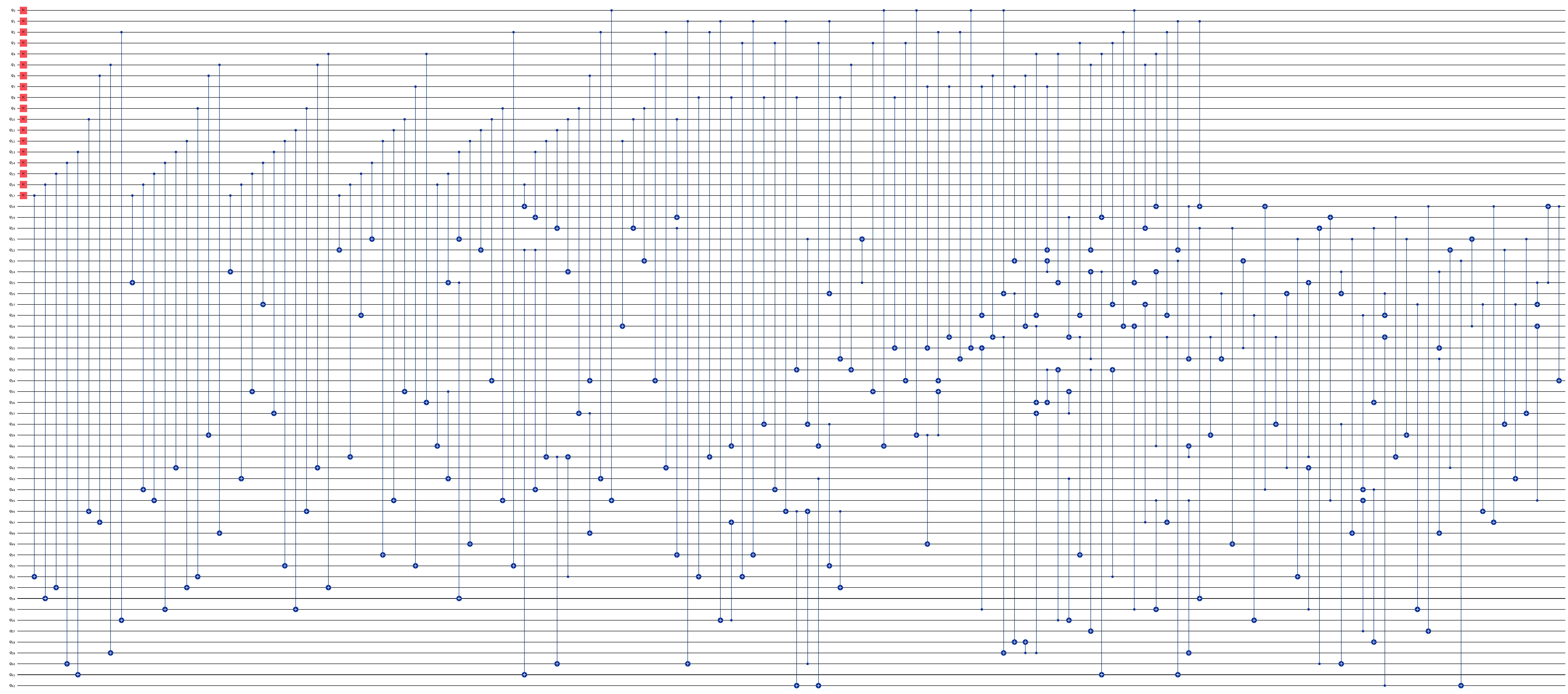}
  \caption{Non-fault-tolerant circuit layout for the quantum BCH code $\llbracket 63,27,7\rrbracket$.}
  \label{fig:circuit_63}
\end{sidewaysfigure*}

\clearpage

% 3つ目の図
\begin{sidewaysfigure*}[p]
  \centering
  \includegraphics[width=0.8\textheight, height=\textwidth, keepaspectratio]{fig/qbch_127_71_9_circuit.pdf}
  \caption{Non-fault-tolerant circuit layout for the quantum BCH code $\llbracket 127,71,9\rrbracket$.}
  \label{fig:circuit_127}
\end{sidewaysfigure*}

\clearpage
% 閾値解析の数値データ
\section{Detailed Performance Tables}
\label{sec:appendix_table}

\begin{longtable}{lcccccc}
\caption{Logical error rates for quantum BCH codes. The column `Distill.' denotes using $m-to-1$ protocol for the ancilla distillation protocol.} \\
\toprule
\makecell[l]{Code \\ $\llbracket n,k,d \rrbracket$} & $t$ & \makecell{Distill. \\ ($m$)} & Rate & $\gamma$ & \makecell{Scaling \\ $p_{th}$} & \makecell{Log. Err. \\ (at $10^{-4}$)} \\
\midrule
\endfirsthead
\multicolumn{7}{c}{{Continued from previous page}} \\
\toprule
\makecell[l]{Code \\ $\llbracket n,k,d \rrbracket$} & $t$ & \makecell{Distill. \\ ($m$)} & Rate & $\gamma$ & \makecell{Scaling \\ $p_{th}$} & \makecell{Log. Err. \\ (at $10^{-4}$)} \\
\midrule
\endhead
\bottomrule
\endfoot
\bottomrule
\endlastfoot
$\llbracket 7,1,3 \rrbracket$ & 1 & 1 & 0.143 & 6.20 & $1.2 \times 10^{-3}$ & $8.1 \times 10^{-6}$ \\
$\llbracket 7,1,3 \rrbracket$ & 1 & 2 & 0.143 & 6.74 & $1.0 \times 10^{-3}$ & $9.5 \times 10^{-6}$ \\
$\llbracket 15,7,3 \rrbracket$ & 1 & 1 & 0.467 & 6.20 & $2.5 \times 10^{-4}$ & $4.0 \times 10^{-5}$ \\
$\llbracket 15,7,3 \rrbracket$ & 1 & 2 & 0.467 & 6.74 & $2.1 \times 10^{-4}$ & $4.8 \times 10^{-5}$ \\
$\llbracket 31,21,3 \rrbracket$ & 1 & 1 & 0.677 & 6.20 & $5.6 \times 10^{-5}$ & $1.8 \times 10^{-4}$ \\
$\llbracket 31,21,3 \rrbracket$ & 1 & 2 & 0.677 & 6.74 & $4.7 \times 10^{-5}$ & $2.1 \times 10^{-4}$ \\
$\llbracket 31,11,5 \rrbracket$ & 2 & 2 & 0.355 & 6.21 & $9.6 \times 10^{-4}$ & $1.1 \times 10^{-6}$ \\
$\llbracket 31,11,5 \rrbracket$ & 2 & 3 & 0.355 & 6.74 & $8.5 \times 10^{-4}$ & $1.4 \times 10^{-6}$ \\
$\llbracket 31,1,7 \rrbracket$ & 3 & 2 & 0.032 & 5.94 & $2.9 \times 10^{-3}$ & $3.9 \times 10^{-9}$ \\
$\llbracket 31,1,7 \rrbracket$ & 3 & 3 & 0.032 & 6.46 & $2.6 \times 10^{-3}$ & $5.5 \times 10^{-9}$ \\
$\llbracket 31,1,7 \rrbracket$ & 3 & 4 & 0.032 & 6.98 & $2.4 \times 10^{-3}$ & $7.5 \times 10^{-9}$ \\
$\llbracket 63,51,3 \rrbracket$ & 1 & 1 & 0.810 & 6.20 & $1.3 \times 10^{-5}$ & $7.5 \times 10^{-4}$ \\
$\llbracket 63,51,3 \rrbracket$ & 1 & 2 & 0.810 & 6.74 & $1.1 \times 10^{-5}$ & $8.9 \times 10^{-4}$ \\
$\llbracket 63,39,5 \rrbracket$ & 2 & 2 & 0.619 & 6.21 & $3.2 \times 10^{-4}$ & $9.5 \times 10^{-6}$ \\
$\llbracket 63,39,5 \rrbracket$ & 2 & 3 & 0.619 & 6.74 & $2.9 \times 10^{-4}$ & $1.2 \times 10^{-5}$ \\
$\llbracket 63,27,7 \rrbracket$ & 3 & 2 & 0.429 & 5.94 & $1.1 \times 10^{-3}$ & $7.4 \times 10^{-8}$ \\
$\llbracket 63,27,7 \rrbracket$ & 3 & 3 & 0.429 & 6.46 & $9.9 \times 10^{-4}$ & $1.0 \times 10^{-7}$ \\
$\llbracket 63,27,7 \rrbracket$ & 3 & 4 & 0.429 & 6.98 & $8.9 \times 10^{-4}$ & $1.4 \times 10^{-7}$ \\
$\llbracket 127,113,3 \rrbracket$ & 1 & 1 & 0.890 & 6.20 & $3.3 \times 10^{-6}$ & $3.1 \times 10^{-3}$ \\
$\llbracket 127,113,3 \rrbracket$ & 1 & 2 & 0.890 & 6.74 & $2.8 \times 10^{-6}$ & $3.6 \times 10^{-3}$ \\
$\llbracket 127,99,5 \rrbracket$ & 2 & 2 & 0.780 & 6.21 & $1.1 \times 10^{-4}$ & $8.0 \times 10^{-5}$ \\
$\llbracket 127,99,5 \rrbracket$ & 2 & 3 & 0.780 & 6.74 & $9.9 \times 10^{-5}$ & $1.0 \times 10^{-4}$ \\
$\llbracket 127,85,7 \rrbracket$ & 3 & 2 & 0.669 & 5.94 & $4.3 \times 10^{-4}$ & $1.3 \times 10^{-6}$ \\
$\llbracket 127,85,7 \rrbracket$ & 3 & 3 & 0.669 & 6.46 & $3.8 \times 10^{-4}$ & $1.8 \times 10^{-6}$ \\
$\llbracket 127,85,7 \rrbracket$ & 3 & 4 & 0.669 & 6.98 & $3.4 \times 10^{-4}$ & $2.4 \times 10^{-6}$ \\
$\llbracket 127,71,9 \rrbracket$ & 4 & 2 & 0.559 & 5.79 & $9.5 \times 10^{-4}$ & $1.2 \times 10^{-8}$ \\
$\llbracket 127,71,9 \rrbracket$ & 4 & 3 & 0.559 & 6.31 & $8.5 \times 10^{-4}$ & $1.9 \times 10^{-8}$ \\
$\llbracket 127,71,9 \rrbracket$ & 4 & 4 & 0.559 & 6.83 & $7.7 \times 10^{-4}$ & $2.9 \times 10^{-8}$ \\
$\llbracket 127,71,9 \rrbracket$ & 4 & 5 & 0.559 & 7.35 & $7.0 \times 10^{-4}$ & $4.1 \times 10^{-8}$ \\
$\llbracket 127,57,11 \rrbracket$ & 5 & 2 & 0.449 & 5.70 & $1.9 \times 10^{-3}$ & $3.6 \times 10^{-11}$ \\
$\llbracket 127,57,11 \rrbracket$ & 5 & 3 & 0.449 & 6.22 & $1.8 \times 10^{-3}$ & $6.1 \times 10^{-11}$ \\
$\llbracket 127,57,11 \rrbracket$ & 5 & 4 & 0.449 & 6.74 & $1.6 \times 10^{-3}$ & $9.9 \times 10^{-11}$ \\
$\llbracket 127,57,11 \rrbracket$ & 5 & 5 & 0.449 & 7.27 & $1.5 \times 10^{-3}$ & $1.5 \times 10^{-10}$ \\
$\llbracket 127,57,11 \rrbracket$ & 5 & 6 & 0.449 & 7.79 & $1.3 \times 10^{-3}$ & $2.4 \times 10^{-10}$ \\
$\llbracket 127,43,13 \rrbracket$ & 6 & 2 & 0.339 & 5.64 & $4.1 \times 10^{-3}$ & $2.0 \times 10^{-14}$ \\
$\llbracket 127,43,13 \rrbracket$ & 6 & 3 & 0.339 & 6.16 & $3.7 \times 10^{-3}$ & $3.8 \times 10^{-14}$ \\
$\llbracket 127,43,13 \rrbracket$ & 6 & 4 & 0.339 & 6.69 & $3.4 \times 10^{-3}$ & $6.7 \times 10^{-14}$ \\
$\llbracket 127,43,13 \rrbracket$ & 6 & 5 & 0.339 & 7.21 & $3.1 \times 10^{-3}$ & $1.1 \times 10^{-13}$ \\
$\llbracket 127,43,13 \rrbracket$ & 6 & 6 & 0.339 & 7.74 & $2.8 \times 10^{-3}$ & $1.9 \times 10^{-13}$ \\
$\llbracket 127,43,13 \rrbracket$ & 6 & 7 & 0.339 & 8.27 & $2.6 \times 10^{-3}$ & $3.0 \times 10^{-13}$ \\
$\llbracket 127,29,15 \rrbracket$ & 7 & 2 & 0.228 & 5.61 & $5.8 \times 10^{-3}$ & $4.6 \times 10^{-17}$ \\
$\llbracket 127,29,15 \rrbracket$ & 7 & 3 & 0.228 & 6.13 & $5.2 \times 10^{-3}$ & $9.3 \times 10^{-17}$ \\
$\llbracket 127,29,15 \rrbracket$ & 7 & 4 & 0.228 & 6.66 & $4.8 \times 10^{-3}$ & $1.8 \times 10^{-16}$ \\
$\llbracket 127,29,15 \rrbracket$ & 7 & 5 & 0.228 & 7.18 & $4.4 \times 10^{-3}$ & $3.3 \times 10^{-16}$ \\
$\llbracket 127,29,15 \rrbracket$ & 7 & 6 & 0.228 & 7.71 & $4.0 \times 10^{-3}$ & $5.9 \times 10^{-16}$ \\
$\llbracket 127,29,15 \rrbracket$ & 7 & 7 & 0.228 & 8.24 & $3.7 \times 10^{-3}$ & $10.0 \times 10^{-16}$ \\
$\llbracket 127,29,15 \rrbracket$ & 7 & 8 & 0.228 & 8.77 & $3.5 \times 10^{-3}$ & $1.6 \times 10^{-15}$ \\
$\llbracket 127,15,19 \rrbracket$ & 9 & 2 & 0.118 & 5.57 & $7.5 \times 10^{-3}$ & $1.3 \times 10^{-21}$ \\
$\llbracket 127,15,19 \rrbracket$ & 9 & 3 & 0.118 & 6.09 & $6.8 \times 10^{-3}$ & $3.2 \times 10^{-21}$ \\
$\llbracket 127,15,19 \rrbracket$ & 9 & 4 & 0.118 & 6.62 & $6.2 \times 10^{-3}$ & $7.4 \times 10^{-21}$ \\
$\llbracket 127,15,19 \rrbracket$ & 9 & 5 & 0.118 & 7.15 & $5.7 \times 10^{-3}$ & $1.6 \times 10^{-20}$ \\
$\llbracket 127,15,19 \rrbracket$ & 9 & 6 & 0.118 & 7.68 & $5.3 \times 10^{-3}$ & $3.3 \times 10^{-20}$ \\
$\llbracket 127,15,19 \rrbracket$ & 9 & 7 & 0.118 & 8.22 & $4.9 \times 10^{-3}$ & $6.4 \times 10^{-20}$ \\
$\llbracket 127,15,19 \rrbracket$ & 9 & 8 & 0.118 & 8.75 & $4.5 \times 10^{-3}$ & $1.2 \times 10^{-19}$ \\
$\llbracket 127,15,19 \rrbracket$ & 9 & 9 & 0.118 & 9.28 & $4.3 \times 10^{-3}$ & $2.2 \times 10^{-19}$ \\
$\llbracket 127,15,19 \rrbracket$ & 9 & 10 & 0.118 & 9.81 & $4.0 \times 10^{-3}$ & $3.8 \times 10^{-19}$ \\
$\llbracket 127,1,21 \rrbracket$ & 10 & 2 & 0.008 & 5.56 & $1.2 \times 10^{-2}$ & $2.5 \times 10^{-25}$ \\
$\llbracket 127,1,21 \rrbracket$ & 10 & 3 & 0.008 & 6.09 & $1.0 \times 10^{-2}$ & $6.7 \times 10^{-25}$ \\
$\llbracket 127,1,21 \rrbracket$ & 10 & 4 & 0.008 & 6.62 & $9.5 \times 10^{-3}$ & $1.7 \times 10^{-24}$ \\
$\llbracket 127,1,21 \rrbracket$ & 10 & 5 & 0.008 & 7.15 & $8.7 \times 10^{-3}$ & $3.9 \times 10^{-24}$ \\
$\llbracket 127,1,21 \rrbracket$ & 10 & 6 & 0.008 & 7.68 & $8.1 \times 10^{-3}$ & $8.6 \times 10^{-24}$ \\
$\llbracket 127,1,21 \rrbracket$ & 10 & 7 & 0.008 & 8.21 & $7.5 \times 10^{-3}$ & $1.8 \times 10^{-23}$ \\
$\llbracket 127,1,21 \rrbracket$ & 10 & 8 & 0.008 & 8.74 & $7.0 \times 10^{-3}$ & $3.6 \times 10^{-23}$ \\
$\llbracket 127,1,21 \rrbracket$ & 10 & 9 & 0.008 & 9.27 & $6.5 \times 10^{-3}$ & $6.9 \times 10^{-23}$ \\
$\llbracket 127,1,21 \rrbracket$ & 10 & 10 & 0.008 & 9.81 & $6.2 \times 10^{-3}$ & $1.3 \times 10^{-22}$ \\
$\llbracket 127,1,21 \rrbracket$ & 10 & 11 & 0.008 & 10.34 & $5.8 \times 10^{-3}$ & $2.3 \times 10^{-22}$ \\
$\llbracket 255,159,13 \rrbracket$ & 6 & 2 & 0.624 & 5.64 & $1.5 \times 10^{-3}$ & $9.9 \times 10^{-12}$ \\
$\llbracket 255,159,13 \rrbracket$ & 6 & 3 & 0.624 & 6.16 & $1.3 \times 10^{-3}$ & $1.8 \times 10^{-11}$ \\
$\llbracket 255,159,13 \rrbracket$ & 6 & 4 & 0.624 & 6.69 & $1.2 \times 10^{-3}$ & $3.2 \times 10^{-11}$ \\
$\llbracket 255,159,13 \rrbracket$ & 6 & 5 & 0.624 & 7.21 & $1.1 \times 10^{-3}$ & $5.5 \times 10^{-11}$ \\
$\llbracket 255,159,13 \rrbracket$ & 6 & 6 & 0.624 & 7.74 & $1.0 \times 10^{-3}$ & $9.1 \times 10^{-11}$ \\
$\llbracket 255,159,13 \rrbracket$ & 6 & 7 & 0.624 & 8.27 & $9.4 \times 10^{-4}$ & $1.4 \times 10^{-10}$ \\
$\llbracket 255,175,11 \rrbracket$ & 5 & 2 & 0.686 & 5.70 & $7.3 \times 10^{-4}$ & $4.7 \times 10^{-9}$ \\
$\llbracket 255,175,11 \rrbracket$ & 5 & 3 & 0.686 & 6.22 & $6.6 \times 10^{-4}$ & $7.9 \times 10^{-9}$ \\
$\llbracket 255,175,11 \rrbracket$ & 5 & 4 & 0.686 & 6.74 & $6.0 \times 10^{-4}$ & $1.3 \times 10^{-8}$ \\
$\llbracket 255,175,11 \rrbracket$ & 5 & 5 & 0.686 & 7.27 & $5.5 \times 10^{-4}$ & $2.0 \times 10^{-8}$ \\
$\llbracket 255,175,11 \rrbracket$ & 5 & 6 & 0.686 & 7.79 & $5.0 \times 10^{-4}$ & $3.1 \times 10^{-8}$ \\
$\llbracket 255,191,9 \rrbracket$ & 4 & 2 & 0.749 & 5.79 & $3.7 \times 10^{-4}$ & $5.6 \times 10^{-7}$ \\
$\llbracket 255,191,9 \rrbracket$ & 4 & 3 & 0.749 & 6.31 & $3.3 \times 10^{-4}$ & $8.6 \times 10^{-7}$ \\
$\llbracket 255,191,9 \rrbracket$ & 4 & 4 & 0.749 & 6.83 & $3.0 \times 10^{-4}$ & $1.3 \times 10^{-6}$ \\
$\llbracket 255,191,9 \rrbracket$ & 4 & 5 & 0.749 & 7.35 & $2.7 \times 10^{-4}$ & $1.8 \times 10^{-6}$ \\
$\llbracket 255,207,7 \rrbracket$ & 3 & 2 & 0.812 & 5.94 & $1.7 \times 10^{-4}$ & $2.1 \times 10^{-5}$ \\
$\llbracket 255,207,7 \rrbracket$ & 3 & 3 & 0.812 & 6.46 & $1.5 \times 10^{-4}$ & $3.0 \times 10^{-5}$ \\
$\llbracket 255,207,7 \rrbracket$ & 3 & 4 & 0.812 & 6.98 & $1.3 \times 10^{-4}$ & $4.1 \times 10^{-5}$ \\
$\llbracket 255,223,5 \rrbracket$ & 2 & 2 & 0.875 & 6.21 & $3.9 \times 10^{-5}$ & $6.5 \times 10^{-4}$ \\
$\llbracket 255,223,5 \rrbracket$ & 2 & 3 & 0.875 & 6.74 & $3.5 \times 10^{-5}$ & $8.3 \times 10^{-4}$ \\
$\llbracket 255,239,3 \rrbracket$ & 1 & 1 & 0.937 & 6.20 & $8.1 \times 10^{-7}$ & $1.2 \times 10^{-2}$ \\
$\llbracket 255,239,3 \rrbracket$ & 1 & 2 & 0.937 & 6.74 & $6.8 \times 10^{-7}$ & $1.5 \times 10^{-2}$ \\
\label{tab:threshold_analysis}
\end{longtable}
\end{document}